\newcommand{\go}{\gamma_0}
\newcommand{\eo}{\eta_0}
\newcommand{\dd}{\delta}
\newcommand{\goL}{\gamma_0^L}
\newcommand{\gol}{\gamma_0^l}
\newcommand{\goG}{\gamma_0^G}
\newcommand{\gog}{\gamma_0^g}
\newcommand{\eog}{\eta_0^g}
\newcommand{\eoL}{\eta_0^L}
\newcommand{\eoG}{\eta_0^G}
\newcommand{\eol}{\eta_0^l}
\newcommand{\Go}{\Gamma_0}
\newcommand{\Eo}{\Xi_0}
\newcommand{\nubfuno}{|f_1|_{1-2\alpha}}
\newcommand{\nubguno}{|g_1-\mathcal{G}|_{1-2\alpha}}
\newcommand{\nubfunos}{|f_1|_{3-2\alpha}}
\newcommand{\nubgunos}{|g_1-\mathcal{G}|_{3-2\alpha}}
\newcommand{\ndbfo}{||X_{f_0}||_1}
\newcommand{\ndbgo}{||X_{g_0}-X_{\mathcal{G}}||_1}
\newcommand{\ndbgc}{||X_{\mathcal{G}}||_1}
\newcommand{\ntbfo}{|||X_{f_0}|||_1}
\newcommand{\ntbgo}{|||X_{g_0}-X_{\mathcal{G}}|||_1}
\newcommand{\ndbfuno}{||X_{f_1}||_{1-2\alpha}}
\newcommand{\ndbguno}{||X_{g_1}-X_{\mathcal{G}}||_{1-2\alpha}}
\newcommand{\ntbfuno}{|||X_{f_1}|||_{1-2\alpha}}
\newcommand{\ntbguno}{|||X_{g_1}-X_{\mathcal{G}}|||_{1-2\alpha}}
\newcommand{\ndbfunos}{||X_{f_1}^{j}||_{3-2\alpha}}
\newcommand{\ndbgunos}{||X_{g_1}^{j}-X_{\mathcal{G}}^{j}||_{3-2\alpha}}
\newcommand{\auno}{\alpha}
\newcommand{\dom}[1]{\mathcal{D}_{#1}}
\newcommand{\flphit}{\Lambda_{\phi_1}^t}
\newcommand{\flphits}{\Lambda_{\phi_1}^{t^*}}
\newcommand{\integ}[4]{\int_{#1}^{#2} #3 d#4 }
\newcommand{\normsup}[2]{|#1|_{#2}}
\newcommand{\ch}[1]{X_{#1}}
\newcommand{\nub}[2]{\left| #1 \right|_{#2}}
\newcommand{\ndb}[2]{\left|\left| #1 \right|\right|_{#2}}
\newcommand{\ntb}[2]{\left|\left|\left| #1 \right|\right|\right|_{#2}}
\newcommand{\jj}{\mathcal{J}}
\newcommand{\der}[2]{\frac{\partial #1}{\partial #2}}
\newcommand{\lie}[2]{\left[#1,#2\right]}
\newcommand{\ti}{\tilde{I}}
\newcommand{\tth}{\tilde{\vartheta}}
\newcommand{\tx}{\tilde{x}}
\newcommand{\ty}{\tilde{y}}
\newcommand{\tr}{\tilde{R}}
\newcommand{\ndbs}[3]{\left|\left| #1^{#3} \right|\right|_{#2}}
\title{Sharp Nekhoroshev estimates for the three-body problem around periodic orbits}
\author{Santiago Barbieri 
\footnote{Dipartimento di Matematica {\itshape Tullio Levi Civita}, via Trieste 63, 35131 Padova, Italy - Corresponding author: santiago.barbieri@math.unipd.it}
\footnote{This work has been developed under the auspices of the European Research Council in the framework of the H2020-ERC Starting Grant 2015 project 677793: {\itshape Stable and Chaotic Motions in the Planetary Problem.} The author warmly acknowledges the ERC for making this possible.
} 
\hspace{25pt}  Laurent Niederman
\footnote{Universit\'{e} Paris XI B\^{a}t. 425, 91405 Orsay Cedex, France \& \newline IMCCE-Observatoire de Paris, 77 avenue Denfert-Rochereau, 75014 Paris Cedex, France}
\\
\date{}
}
\begin{document}
\maketitle

\section*{Abstract}
We construct a Nekhoroshev-like result of stability with sharp constants for the planar three body problem, both in the planetary and in the restricted circular case, by using the periodic averaging technique. Our constructions can be generalized to any near-integrable hamiltonian system whose unperturbed hamiltonian is quasi-convex. The dependence of the constants on the analyticity widths of the complex hamiltonian is carefully taken into account. This allows for a deep analytical understanding of the limits of such techniques in insuring Nekhoroshev stability for high magnitudes of the perturbation and suggests hints on how to overcome such obstructions in some cases. Finally, two examples with concrete values are considered, one for the planetary case and one for the restricted one.  
\section{Introduction}
It is well known since the end of the 19th century that the problem of $n$ point masses mutually interacting by the sole gravitational force is non-integrable for $n\geq 3$ (see \cite{Chenciner_2012} for a detailed historical overview on this subject). Coming to more recent times, the birth of KAM theory in the mid-twentieth century led to new mathematical efforts in order to establish whether quasi-periodic motions persisted in the $n$-body problem for suitable perturbative parameters. In particular, important results of stability based on KAM theory were achieved in \cite{Arnold_1963} for the planar three-body problem, in \cite{Robutel_1995} for the spatial case and in \cite{Chierchia_Pinzari_2011}, \cite{Fejoz_2004}, \cite{Pinzari_2009} for the general $n$-body problem. Numerical studies (see e.g. \cite{Laskar_1994}) show that the motion of the outer Solar System stays stable for timescales which exceed the lifetime of the universe, so that purely analytical investigations on the stability of the major planets make sense. Moreover, the direct application of KAM theorems to the $n$-body problem, with $n\geq 3$, usually leads to pessimistic estimates on the maximal size that the perturbation can reach in order for such results to hold (see \cite{Castan_2017} for a recent discussion on this issue). On the other hand, good estimates can be obtained when considering the invariance of particular tori under the dynamics of a suitably truncated perturbation, as it is done in \cite{Celletti_Chierchia_2007}. 
\newline
Another possibility is to apply the less-demanding Nekhoroshev theorem to such problem in order to insure that the perturbed system stay close to the integrable one over exponentially long times. Indeed, though leading to a weaker, non-perpetual form of stability, Nekhoroshev theorem requires less strict conditions and yields bounds on the perturbative parameters which are closer to realistic ones (see e.g. \cite{Niederman_1996}). Moreover, such result holds on open sets. Two different proofs of such statement exist: the original one by Nekhoroshev \cite{Nekhoroshev_1977} and the one developed by Lochak in \cite{Lochak_1992}. The first approach insures a slow rate of diffusion of the action variables over exponentially long times under the generic assumption that the unperturbed system satisfies a condition known as {\itshape steepness}. Such result has been improved in \cite{Benettin_Galgani_Giorgilli_1985} and in \cite{Poschel_1993} for the convex case and in \cite{Guzzo_Chierchia_Benettin_2016} for the original {\itshape steep} case. The second proof works under the hypothesis that the unperturbed hamiltonian is quasi-convex and exploits such geometrical property in order to insure exponential times of stability in the neighborhood of periodic orbits of the unperturbed system. A global result of stability is obtained once one covers the entire phase space with such neighborhoods with the help of Dirichlet's approximation theorem. Improvements in this second approach can be found in \cite{Bounemoura_Niederman_2012} and \cite{Lochak_Neishtadt_Niederman_1994}. A brief overview on both proofs can be found in \cite{Guzzo_2015} and \cite{Niederman_2009}. 
\newline
As for the applications to celestial mechanics, Niederman carefully derived in \cite{Niederman_1996} estimates of stability over exponentially long times for the three body planetary problem around a periodic torus. In the case of the $5:2$ resonance, stability holds for a time comparable with the age of the Solar System if the ratio for the mass of the greater planet on the Sun mass does not exceed $10^{-13}$ (the real value is actually $10^{-3}$ in the Solar System). On the other hand, numerical-assisted studies on Nekhoroshev stability, with realistic magnitudes for the perturbation, have been achieved by Giorgilli, Locatelli and Sansottera in \cite{Giorgilli_Locatelli_Sansottera_2009} and \cite{Giorgilli_Locatelli_Sansottera_2017} for a suitably truncated three or even four body hamiltonian in the neighborhood of an invariant torus. An application leading to a remarkably good upper bound on the perturbative parameter ($\varepsilon<10^{-6}$) in the non-resonant restricted, circular, planar case has also been considered by Celletti and Ferrara in \cite{Celletti_Ferrara_1996}.  Finally, an interesting discussion on the threshold on the magnitude of the perturbation for Nekhoroshev theorem to hold can be found in \cite{Bounemoura_2012}.
\newline
\newline
With respect to the present work, we intend to reach multiple goals which can be summarized as follows:
\begin{enumerate}
\item The first aim consists in obtaining a Nekhoroshev-like stability result with sharp constants for the planar three-body problem with the help of refined estimates on hamiltonian vector fields. Actually, our proof can be developed for any near-integrable hamiltonian system.
\item Secondly, we want to compare such result on the planetary three-body problem to those of Niederman in \cite{Niederman_1996} and see if sharp estimates lead to improvements in the time of stability and in the maximal allowed size for the perturbation.
\item With the help of the previous results, we want to be able to understand which are the analytical obstacles in this reasoning that prevent one from reaching physical values for the perturbation in the planetary case and conjecture how to overcome them in some cases. 
\item Finally, we will consider an application of the previous results to the restricted, circular three-body problem as modeled in \cite{Celletti_Chierchia_2007} and \cite{Celletti_Ferrara_1996}; as in the previous case, this will allow for a deeper understanding of the limits of the theory we make use of and, moreover, will open the possibility for reaching realistic values in the perturbative parameters once suitably powerful numerical tools are implemented. 
\end{enumerate}
The authors conjecture that the deadlocks encountered by the theory in such framework are general and can be considered as fundamental in any application of Nekhoroshev theory to finite-dimensional systems close to periodic integrable orbits. 
\newline
\newline
The paper is structured as follows: in paragraph \ref{Notations} we introduce notations and in section \ref{tbpp} the Nekhoroshev stability of the plane, planetary three-body problem is investigated with sharp techniques leading to sharp constants. Chapter \ref{restricted_tbp} is devoted to an application of our previous result to the restricted, circular, planar three-body problem, whereas section \ref{num_comp} contains applications to concrete examples. 
\section{Notations}\label{Notations}
In this section, we give some definitions that will be used throughout this work. 
\newline
In order for the calculations which will appear in the next chapters to be carried on, one must consider the following sets.
\newtheorem{sets}{Definition}
\begin{sets}
We define the real balls 
\begin{align}
\begin{split}
S_{I_0}(\rho) & := \{I\in\mathbb{R}: |I-I_0|<\rho\}\ ,\\
B_{x_0,y_0}(\xi) & := \{(x,y)\in\mathbb{R}^2: (x-x_0)^2+(y-y_0)^2<\xi^2\}
\end{split}
\end{align}
and the complex domain
\begin{align}\label{dominiio}
\begin{split}
\dom{\rho,r,s,\xi,u}:=\{& (I_1,I_2,\vartheta_1,\vartheta_2,x_1,x_2,y_1,y_2) \in\mathbb{C}^8:\\
& \exists \ I_j^*\in S_{0}(\rho) \text{ such that } \left|I_j-I_j^*\right|<r,\ j\in\{1,2\}\ ,\\
& \Re e(\vartheta_1,\vartheta_2)\in\mathbb{T}^2\ ,\ \ \left|\Im m(\vartheta_j)\right|<s,\ j\in\{1,2\}\ ,\\
& \exists (x_j^*,y_j^*)\in B_{0,0}(\xi) \text{ such that }\\
& \left|x_j-x_j^*\right|<u,\ \left|y_j-y_j^*\right|<u, \ j\in\{1,2\}
\}\ .
\end{split}
\end{align}
For the sake of simplicity, since the quantities we will deal with in the sequel are just $r,s$ and $u$, the last set will often be denoted by making use of some shorthand notations, namely
\begin{align}\label{shorthands}
\begin{split}
\dom{r,s,u} & := \mathcal{D}_{\rho,r,s,\xi,u}\ ,\\
\dom{\alpha-\beta} & :=\dom{r(\alpha-\beta),s(\alpha-\beta),u(\alpha-\beta)}\ ,\ \ 0\leq\beta\leq\alpha\\
\dom{\alpha,\beta} & := \dom{\alpha r,\alpha s,\beta u}\ .
\end{split}
\end{align}
\end{sets}
\
\newline
Now, let $F$ be a continuous scalar function of many complex variables bounded in an open domain $\mathcal{A}$, i.e.
$$
F: \mathcal{A}\subset\mathbb{C}^n \longrightarrow \mathbb{C}\ ,\ \ 
z\longmapsto F(z)\ ,\ \ \sup_{z\in\mathcal{A}}|F(z)|<+\infty\ .
$$
\newtheorem{nubar}[sets]{Definition}
\begin{nubar}
We denote the $\sup$-norm of $F$ with
$$
\nub{F}{\mathcal{A}} := \sup_{z\in\mathcal{A}}|F(z)|\ .
$$ 
\end{nubar}
\
\newline
A natural extension of this definition applies when considering a continuous vector-valued function 
$$
v:\mathcal{A}\subset\mathbb{C}^n\longrightarrow\mathbb{C}^m\ \ ;\ \ \ v^j \in \mathcal{C}(\mathcal{A})\ \ \forall j \in \{1,...,m\}\ ;
$$

\newtheorem{nubarv}[sets]{Definition}
\begin{nubarv}
The $\sup$-norm for $v$ is defined as follows:
$$
\nub{v}{\mathcal{A}} := \sup_{j\in\{1,...,m\}}\nub{v^j}{\mathcal{A}}:= \sup_{j\in\{1,...,m\}}\sup_{z\in\mathcal{A}}|v^j(z)|\ .
$$ 
 
\end{nubarv}
\
\newline
The shorthands
$$
\nub{\cdot}{\alpha-\beta}:=\nub{\cdot}{\dom{\alpha-\beta}}
\ \ ,\ \ \ 
\nub{\cdot}{\alpha,\beta}:=\nub{\cdot}{\dom{\alpha,\beta}} 
$$
will often be used both for functions and vector fields.
\newline
Let now $\mathcal{M}\subset\mathbb{C}^8$ be a symplectic complex manifold with local Darboux coordinates $(I_j,\vartheta_j,x_j,y_j),\ j\in\{1,2\},$ for the Liouville form
$$
\omega = \sum_{j=1}^2 dI_j \wedge d\vartheta_j + \sum_{j=1}^2 dx_j\wedge dy_j 
$$
and $F$ a hamiltonian function defined on $\mathcal{M}$. 
\newtheorem{sym_grad}[sets]{Definition}
\begin{sym_grad}
For $j\in\{1,2\}$, we will denote the {\itshape symplectic gradient} of $F$ with 
$$
X_F:= \mathcal{J}\nabla F:= \left(
\displaystyle X_F^{I_j},
\displaystyle X_F^{x_j},
\displaystyle X_F^{\vartheta_j},
\displaystyle X_F^{y_j},
\right)^\dag
:= \left(
\displaystyle -\der{F}{\vartheta_j},
\displaystyle -\der{F}{y_j},
\displaystyle\der{F}{I_j},
\displaystyle\der{F}{x_j},
\right)^\dag\ \ \ ,
$$
where 
$$
\mathcal{J}:=
\left(
\begin{matrix}
0_{4\times 4} & -\mathcal{I}_{4\times 4}\\
\mathcal{I}_{4\times 4} & 0_{4\times 4}\\
\end{matrix}
\right)
$$
is the symplectic matrix.
\end{sym_grad}
\ 
\newline
Moreover, the following anisotropic norms turn out to be particularly useful when dealing with analytic vector fields whose analyticity widths $r,s,u$ have different magnitudes.
\newtheorem{aniso}[sets]{Definition}
\begin{aniso}
To any holomorphic hamiltonian vector field $X_F$ defined in $\dom{r,s,u}\subset\mathcal{M}$ we associate the anisotropic norms 
$$
\ndb{X_F}{r,s,u}:=\max_{j\in\{1,2\}}\left\{\frac{\nub{X_F^{I_j}}{r,s,u}}{r},\ \frac{\nub{X_F^{\vartheta_j}}{r,s,u}}{s}\right\}
$$
and
$$
\ntb{X_F}{r,s,u}:=\max_{j\in\{1,2\}}\left\{\frac{\nub{X_F^{x_j}}{r,s,u}}{u},\ \frac{\nub{X_F^{y_j}}{r,s,u}}{u}\right\}\ .
$$
\end{aniso}
\ 
\newline
{\bfseries Remark: } it is easy to see that the terms in brackets have the same order of magnitude. Indeed, by making use of the Cauchy inequalities one has
\begin{align}
\begin{split}
\frac{\nub{X_F^{I_j}}{r,s,u}}{r} \leq \frac{\nub{F}{2r,2s,2u}}{rs}\ \ ; &\ \ \ \ 
\frac{\nub{X_F^{\vartheta_j}}{r,s,u}}{s} \leq \frac{\nub{F}{2r,2s,2u}}{rs}\\
\ \\
\frac{\nub{X_F^{x_j}}{r,s,u}}{u} \leq \frac{\nub{F}{2r,2s,2u}}{u^2}\ \ ; &\ \ \ \ 
\frac{\nub{X_F^{y_j}}{r,s,u}}{u} \leq  \frac{\nub{F}{2r,2s,2u}}{u^2}\ .\\
\end{split}
\end{align}
\ 
\newline
Finally, we set some notations that will be used in the next sections  when dealing with hamiltonian flows. 
\newtheorem{flows}[sets]{Definition}
\begin{flows}
The symplectic flow at time $t$ associated to the hamiltonian function $F$ is denoted with $\Lambda_F^t$ and, if such flow has period $T$, the average on $\Lambda_F^t$ of any continuous function $G$ is indicated with
$$
\langle G \rangle_F:=\frac{1}{T}\int_0^T G\circ\Lambda_F^t\ dt\ .
$$
\end{flows}
\
\newline
With the definitions above, we are now ready to build up a suitable hamiltonian framework for the planetary three-body problem.
\section{The plane, planetary three-body problem}\label{tbpp}
\subsection{Hamiltonian framework}\label{framework}
From the mathematical point of view, the planetary three-body problem consists of three points of masses $m_j,\ j\in\{0,1,2\},$ which mutually interact through the sole gravitational force. Throughout this work, the mass $m_0$ of the first body is assumed to be much greater than $m_1$ and $m_2$; for example, when considering a simplified model of the Solar System, $m_0$ represents the Sun mass whereas $m_1,m_2$ are the masses of the two major planets, i.e. Jupiter and Saturn.
\newline
By choosing the center of mass $O$ as the origin of an intertial frame, the position of the $j$-th body is given by the vector 
$$
u_j:=
\left(
u_{j1},
u_{j2},
u_{j3}
\right)^\dag\ .
$$
With this choice of coordinates, the planetary three-body hamiltonian reads
\begin{equation}
H_{init.}(\tilde{u},u):= \sum_{j=0}^2 \frac{||\tilde{u}_j||^2}{2m_j}-G\sum_{0\leq j<k\leq 2}\frac{m_j m_k}{||u_j-u_k||}\ ,
\end{equation}
where 
$$
\tilde{u}:=m_j \dot{u}_j=
\left(
\tilde{u}_{j1},
\tilde{u}_{j2},
\tilde{u}_{j3}
\right)
$$ 
are the momenta conjugated to $u_j$ for the symplectic form 
$$
\omega := \sum_{j=0}^2\sum_{k=1}^3 d\tilde{u}_{jk}\wedge du_{jk}\ .
$$
The Jacobi system of coordinates turns out to be particularly useful when studying the three-body problem. Its detailed construction may be found, for example, in the second chapter of volume I of Poincar\'{e}'s {\itshape Le\c{c}ons} \cite{Poincare_1907} or in \cite{Fejoz_2002} for a modern presentation. Here, we just give the explicit expression which links the Jacobi coordinates to the old ones
\begin{equation}\label{helio}
\left(
\begin{matrix}
r_0\\
\ \\
r_1\\
\ \\
r_2
\end{matrix}
\right):=
\left(
\begin{matrix}
1 & 0 & 0 \\
\ \\
-1 & 1 & 0\\
\ \\
-\sigma_0 & -\sigma_1 & 1\\
\end{matrix}
\right)
\left(
\begin{matrix}
u_0\\
\ \\
u_1\\
\ \\
u_2\\
\end{matrix}
\right)
=\mathcal{A}
\left(
\begin{matrix}
u_0\\
\ \\
u_1\\
\ \\
u_2\\
\end{matrix}
\right)\ ,
\end{equation}
where we have introduced the quantities
\begin{equation}
\sigma_0:=\frac{m_0}{m_0+m_1}\ ,\ \ \ \sigma_1:=\frac{m_1}{m_0+m_1}\ .
\end{equation}
The transformation can be symplectically completed for the momenta and yields
\begin{equation}
\left(
\begin{matrix}
\tilde{r}_0\\
\ \\
\tilde{r}_1\\
\ \\
\tilde{r}_2\\
\end{matrix}
\right)\ 
:=
\left(\mathcal{A}^\dag\right)^{-1}
\left(
\begin{matrix}
\tilde{u}_0\\
\ \\
\tilde{u}_1\\
\ \\
\tilde{u}_2\\
\end{matrix}
\right)\ 
=
\left(
\begin{matrix}
1 & 1 & 1 \\
\ \\
0 & 1 & \sigma_1\\
\ \\
0 & 0 & 1\\
\end{matrix}
\right)
\left(
\begin{matrix}
\tilde{u}_0\\
\ \\
\tilde{u}_1\\
\ \\
\tilde{u}_2\\
\end{matrix}
\right)\ .
\end{equation}
If we denote
\begin{align}
\begin{split}
\mu_1:=\frac{m_0m_1}{m_0+m_1}\ ,&\ \ \mu_2:=\frac{(m_0+m_1)m_2}{m_0+m_1+m_2}\\
M_1:=m_0+m_1\ ,&\ \ M_2:=m_0+m_1+m_2\ ,
\end{split}
\end{align}
then the three-body hamiltonian expressed in Jacobi coordinates assumes the following form 
\begin{align}
\begin{split}
H_{J}(r,\tilde{r})= & \sum_{j=1}^2 \frac{||\tilde{r}_j||^2}{2\mu_j}-G_N\sum_{j=1}^2 \frac{\mu_jM_j}{||r_j||}\\
+ & G_N m_2\left(\frac{M_1}{||r_2||}-\frac{m_0}{||r_2+\sigma_1r_1||}-\frac{m_1}{||r_2-\sigma_0r_1||}\right)\ .
\end{split}
\end{align}
The first part of the hamiltonian describes the keplerian motion of two bodies of masses $\mu_j$ around a central attractor of mass $M_j$, whereas the second row has a much smaller magnitude and can be treated as a perturbation.
Indeed, by defining 
\begin{equation}
K(\tilde{r}_j,r_j):= \sum_{j=1}^2 \frac{||\tilde{r}_j||^2}{2\mu_j}-G_N\sum_{j=1}^2 \frac{\mu_jM_j}{||r_j||}\ ,
\end{equation}
\begin{equation}\label{k2}
P(r_j):= G_N m_2\left(\frac{M_1}{||r_2||}-\frac{m_0}{||r_2+\sigma_1r_1||}-\frac{m_1}{||r_2-\sigma_0r_1||}\right)
\end{equation}
and 
$$
\varepsilon:=	\max_{j\in \{1,2\}}\left\{\varepsilon_j\right\} := \max_{j\in \{1,2\}}\left\{\frac{m_j}{m_0}\right\}\ ,
$$
it is straightforward to see that
$$
\left|\frac{P(r_j)}{K(\tilde{r}_j,r_j)}\right|_{r,s,u}=O(\varepsilon)\ .
$$
In the case of the Sun-Jupiter-Saturn system one has $\varepsilon\sim 10^{-3}$.
\newline
As it is well known (see e.g. \cite{Boccaletti_Pucacco_2004} for a detailed explanation), the unperturbed keplerian problem described by hamiltonian $K_1$ satisfies the hypotheses of the Arnold-Liouville integrability theorem. Namely, for negative values of the total energy, its trajectories in the configuration space are two fixed ellipses (labeled with an index $j\in\{1,2\}$). The semimajor axes and eccentricities are denoted, respectively, with $a_j$ and $e_j$ and the position of the orbit with respect to a plane of reference is described by the three Euler angles which, in this particular case, are the longitude of the ascending node $\Omega_j$, the argument of periapsis $\omega_j$ and the inclination $\iota_j$. The position of a body along its elliptic trajectory is determined once its real anomaly $f_j$ is given. 
\begin{figure}[h]\label{elements}
\centering
\hspace*{-0.cm}\includegraphics[scale=0.25]{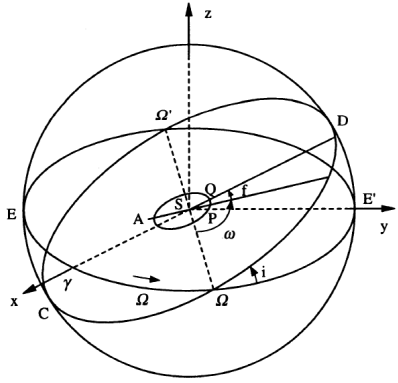}
\caption{Orbital elements for the Kepler's problem (from Boccaletti \& Pucacco \citep{Boccaletti_Pucacco_2004}).}
\end{figure}
\begin{figure}[h]\label{eccentric}
\centering
\hspace*{-0.cm}\includegraphics[scale=0.3]{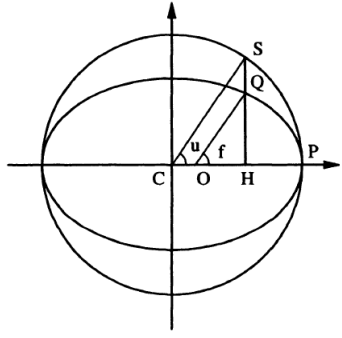}
\caption{Diagram showing the eccentric anomaly $u$ and the real anomaly $f$; the celestial body is at point Q and the Sun is in O (from Boccaletti \& Pucacco \citep{Boccaletti_Pucacco_2004}).}
\end{figure}
\
\newline
We denote with $n_j$ the mean motion (frequency of the real anomaly) of the $j$-th body and we define the mean anomalies
$$
M_j:= n_j (t-t_0)\ ,
$$
which are related to the eccentric anomalies $u_j$ by Kepler's equation
$$
M_j=u_j-e_j\sin{u_j}\ .
$$
As a consequence of Arnold-Liouville integrability theorem, action-angle coordinates exist for this problem and are named after French's mathematician and astronomer Charles Delaunay, who first introduced them in his {\itshape Trait\'e du mouvement de la lune}, published in 1860 (see \cite{Delaunay_1860}).
The Delaunay variables written as functions of the orbital elements read
\begin{equation}\label{delaunay}
\begin{cases}
\Lambda_j &:= \mu_j\sqrt{G_NM_j a_j}\\
G_j &:= \Lambda_j \sqrt{1-e_j^2}\\
\Theta_j &:= G_j \cos \iota_j\\
l_j &:= M_j\\
g_j &:=\omega_j\\
\theta_j &:= \Omega_j
\end{cases}
\end{equation}
and it is straightforward to see that they are ill-defined for null eccentricities and inclinations. Therefore, another system of coordinates must be considered in order to avoid singularities. The usual choice consists in introducing Poincar\'{e}'s elliptic variables, namely
\begin{equation}\label{poincare}
\begin{cases}
\Lambda_j & :=  \mu_j\sqrt{G_N M_j a_j}\\
\lambda_j & :=  M_j + \omega_j+\Omega_j\\
x_j+iy_j & := \left[2\Lambda_j\left(1- \sqrt{1-e_j^2}\right)\right]^{1/2}\exp[-i(\omega_j+\Omega_j)]\\
p_j+iq_j & := \left[2\Lambda_j\sqrt{1-e_j^2}(1-\cos{\iota_j})\right]^{1/2}\exp(-i\Omega_j)
\end{cases}\ .
\end{equation}
In this frame, the planetary three-body hamiltonian takes the form (the superscript $p$ stands for Poincar\'{e})
\begin{equation}\label{H_poincare}
H^p(\Lambda_j,\lambda_j,x_j,y_j,p_j,q_j)=H^p_K(\Lambda_j)+\varepsilon H^p_P(\Lambda_j,\lambda_j,x_j,y_j,p_j,q_j)\ ,\ \ j\in\{1,2\}\ ,
\end{equation}
where the Keplerian part just depends on the actions $\Lambda_j$
\begin{equation}\label{kepler}
H^p_K(\Lambda):= -G_N^2\sum_{j=1}^2 \frac{M_j^2\mu_j^3}{2\Lambda_j^2}
\end{equation} 
and the perturbation can be explicitly computed by inserting system (\ref{poincare}) into (\ref{k2}).
For more details about the Poincar\'{e} variables, see e.g. \cite{Fejoz_2013}.
\newline
For $\varepsilon=0$, the phase space of the unperturbed system is foliated with invariant tori. 
Now, choose two fixed actions $\Lambda_1^0$ and $\Lambda_2^0$ corresponding to a resonant frequency vector $\omega:=(\omega_1,\omega_2)$ for the unperturbed system, i.e. 
$$
\frac{\omega_1}{\omega_2}=\frac{p}{q}\ ,
$$
with $p$ and $q$ two positive integers. We are interested in the behaviour of the planetary three-body hamiltonian in the neighborhood of the resonant torus corresponding to these frequencies, so we consider the translation
\begin{equation}
\left(
\begin{matrix}
I_1\\
I_2
\end{matrix}
\right)
:=
\left(
\begin{matrix}
\Lambda_1-\Lambda_1^0\\
\Lambda_2-\Lambda_2^0
\end{matrix}
\right)
\end{equation} 
and we compute a Taylor's developement of $H^p_K$ with initial point $(I_1,I_2)=(0,0)$. \newline
As a matter of notation, in the sequel we shall often use the shorthand $(I,\vartheta,x,y,\xi,\eta)$ to denote $(I_1,I_2,\vartheta_1,\vartheta_2,x_1,x_2,y_1,y_2,\xi_1,\xi_2,\eta_1,\eta_2)$.
\newline
Now, we restrict to the planar case $(\xi,\eta)=(0,0)$, so that the complete hamiltonian assumes the form
\begin{align}\label{ham_bello}
\begin{split}
H(I,\vartheta,x,y)= & H_K(I)+\varepsilon H_P(I,\vartheta,x,y)\\
= & H_K(0)+\langle \omega , I \rangle + \mathcal{G}(I) +\varepsilon H_P(I,\vartheta,x,y)\ ,
\end{split}
\end{align}
where in the second line we have performed a Taylor expansion and $\mathcal{G}(I)$ denotes the remainder of order $2$ in the actions.
\newline
If we denote 
$$
h(I):=\langle \omega , I \rangle\ ,
$$
the hamiltonian can be splitted into a resonant part $g_0$ and a non-resonant part $f_0$
\begin{equation}
g_0(I,\vartheta):=\mathcal{G}(I)+\varepsilon \left\langle H_P(I,\vartheta,x,y) \right\rangle_h\ ,
\end{equation}
\begin{equation}
f_0(I,\vartheta):=\varepsilon H_P(I,\vartheta,x,y)-\varepsilon\left\langle H_P(I,\vartheta,x,y)\ \right\rangle_h\ 
\end{equation}
which, from their very definitions, satisfy $\langle f_0 \rangle_h=0$ and $\{h,g_0\}=0$.
\newline
As we shall see in the next paragraph, our purpose consists in reducing the size of $f_0$ with the help of some sharp techniques of perturbation theory.

\subsection{Analyticity widths, convexity and initial estimates}\label{analiticity}
It is well known that hamiltonian (\ref{H_poincare}) is analytic in some complex domain and, as we shall see later on, a good knowledge on the analyticity widths is crucial in establishing the limits of the theory we deal with. Here, we rely on the recent and important work \cite{Castan_2017} by Castan which gives explicit estimates for the magnitude of hamiltonian (\ref{H_poincare}) in its domain of analyticity. We stress the fact that in \cite{Castan_2017} the analyticity of the {\itshape complete} hamiltonian is taken into account, without making any truncation, so that one is left with estimates on the analyticity widths which take into account all the singularities that function (\ref{H_poincare}) encounters in the complex field. Explicit numerical values will be considered in paragraph \ref{num_comp}; here, we shall just assume that hamiltonian (\ref{ham_bello}) is analytic in a domain $\mathcal{D}_{\rho,4r,4s,\xi,4u}$ for some $(\rho,r,s,\xi,u)\in\mathbb{R}^5$. Furthermore, since the unperturbed hamiltonian (\ref{kepler}) is continuous and convex on the bounded domain we are considering, for all couples $(I_1,I_2)\in S_I(4r)\times S_I(4r)$  the eigenvalues $\varrho_1(I),\varrho_2(I)$ of the hessian matrix $D^2 H_K(I)$ satisfy
\begin{align*}
\begin{split}
&|\varrho_1|_{r,s,u}+|\varrho_2|_{r,s,u}\leq K\\
&\min\{|\varrho_1|_{r,s,u},|\varrho_2|_{r,s,u}\}\geq \kappa
\end{split}\ ,
\end{align*}
where $\kappa,K$ are two positive real constants which can be computed explicitly since the expression for $H_K$ is explicit.
As we shall see in paragraph \ref{stab}, convexity plays a crucial role in insuring stability.
\newline
Finally, we estimate the sizes of functions and vector fields by making use of the Cauchy inequalities:
\begin{equation}\label{useful}
\begin{cases}
\nub{f_0}{4} := \nub{\varepsilon H_P-\varepsilon\langle H_P\rangle_h}{4}  \leq 2\varepsilon\nub{H_P}{4}\ \\
\ \\
\nub{g_0-\mathcal{G}}{4} := \nub{\varepsilon\langle H_P\rangle_h}{4} \leq \varepsilon\nub{H_P}{4}\ \\
\ \\
\ndb{X_{f_0}}{3} := \displaystyle\max_{j\in\{1,2\}}\left\{\frac{\nub{\ch{f_0}^{I_j}}{3}}{r},\frac{\nub{\ch{f_0}^{\vartheta_j}}{3}}{s}\right\}\leq \frac{\nub{f_0}{4}}{rs}=: \eo\\
\ \\
\ntb{X_{f_0}}{3} := \displaystyle \max_{j\in\{1,2\}}\left\{\frac{\nub{\ch{f_0}^{x_j}}{3}}{u},\frac{\nub{\ch{f_0}^{y_j}}{3}}{u}\right\}\leq \frac{\nub{f_0}{4}}{u^2} =: \Eo\\
\ \\
\ndb{X_{g_0}-X_{\mathcal{G}}}{3} := \displaystyle \max_{j\in\{1,2\}}\left\{\frac{\nub{\ch{g_0-\mathcal{G}}^{I_j}}{3}}{r},\frac{\nub{\ch{g_0-\mathcal{G}}^{\vartheta_j}}{3}}{s}\right\}\leq \frac{\nub{g_0-\mathcal{G}}{4}}{rs}=: \gamma_0\\
\ \\
\ntb{X_{g_0}-X_{\mathcal{G}}}{3} := \displaystyle \max_{j\in\{1,2\}}\left\{\frac{\nub{\ch{g_0-\mathcal{G}}^{x_j}}{3}}{u},\frac{\nub{\ch{g_0-\mathcal{G}}^{y_j}}{3}}{u}\right\}\leq \frac{\nub{g_0-\mathcal{G}}{4}}{u^2} \leq =: \Gamma_0
\ \\
\ndb{X_{\mathcal{G}}}{3} := \displaystyle\frac{\nub{\ch{\mathcal{G}}^{\vartheta_j}}{3}}{s}:= \dd\\
\end{cases}\ .
\end{equation}
\ 
\newline
Notice that since $\ch{\mathcal{G}}$ has an explicit expression in the case we are considering, it is directly estimated without making use of the Cauchy inequalities.
\newline
As we see from the estimates above, $u=\sqrt{rs}$ is a natural choice for the analyticity width in the cartesian variables. However, since we want to stay as sharp as possible, we choose to leave $u$ as a free parameter and we set
\begin{equation}\label{beta_def}
\beta := \frac{\sqrt{rs}}{u}\ .
\end{equation}
\
\newline
With this setup, we can now define three real functions $\upsilon_0,\Upsilon_0,\zeta_0: \mathbb{R}\longrightarrow\mathbb{R}$ which depend on the parameters $\eo,\Eo,\go,\Go,\dd$ and act as follows:
\begin{align}\label{def_upsilon}
\begin{split}
\upsilon_0:\ x  \longmapsto  & \left(Tx\right)^2\eo\chi_0+Tx^2\Theta_0(2\eo+2\go+\dd)+\frac{Tx}{2}\chi_0\\ 
& + \Theta_0 x\left(1+\frac{\go}{\eo}+\frac{\dd}{\eo}\right)+\left(\frac{Tx\Eo}{\beta}\right)^2\frac{\chi_0}{\eo}\\
& +2\frac{Tx^2\Eo\Theta_0}{\beta^2}\left(\frac{\Eo}{\eo}+\frac{\Go}{\eo}\right)\ ,
\end{split}
\end{align}
\begin{align}\label{def_Upsilon}
\begin{split}
\Upsilon_0: \ x  \longmapsto & \beta^2\left(Tx\eo\right)^2\frac{\chi_0}{\Eo}+\beta^2Tx^2\eo\Theta_0\left(2\frac{\eo}{\Eo}+2\frac{\go}{\Eo}+\frac{\dd}{\Eo}\right)+\frac{Tx}{2}\chi_0\\
& + \Theta_0 x\left(1+\frac{\Go}{\Eo}\right)+\left(Tx\Eo\right)^2\frac{\chi_0}{\Eo}+2Tx^2\Theta_0(\Eo+\Go)\ ,\\
\ \\
\end{split}
\end{align}
\begin{align}\label{def_zeta0}
\zeta_0:\ x\longmapsto & \frac{Tx}{2}\max\{\go+\dd,\Go\}+\Theta_0 x\ ,
\end{align}
where $\chi_0$ and $\Theta_0$ are two real constants which read
\begin{align}\label{chi0}
\begin{split}
\chi_0
:= & \max\{\Eo+\Go,\eo+\go+\dd\}\ ,\\
\ \\
\Theta_0:= &\max\left\{\frac{T\Eo}{2},\frac{T\eo}{2}\right\}\ .
\end{split}
\end{align}
In the sequel, $\upsilon_0,\Upsilon_0$ will describe the decreasing of the vector field associated to the non resonant perturbation, while $\zeta_0$ is related to the decreasing of the non-resonant perturbation itself. \newline
With the construction above, we can exploit the convexity of the integrable part of the hamiltonian in order to obtain a theorem that insures stability in the action variables for a suitably long time. To do this, we shall construct a sharp resonant normal form inspired by a result contained in \cite{Poschel_1999} and then we shall confine the actions with the help of a geometric tool described in \cite{Lochak_1992} and \cite{Lochak_1993}. We stress that the estimates and the techniques which will henceforth be used can be generalized to any quasi-integrable system. Furthermore, in the case under study, the drift of the cartesian variables $(x_j,y_j)$ will be bounded by the conservation of the total angular momentum
\begin{equation}\label{angular_momentum}
\mathcal{N}:=\sum_{j=1}^2\Lambda_j(t)\sqrt{1-e_j^2(t)}\ .
\end{equation}
\subsection{Stability in the neighbourhood of a periodic torus}\label{stab}
The main theorem can be stated as follows:
\newtheorem{ac_stability}{Theorem (Stability for the whole system)}
\begin{ac_stability}\label{ac_stability}
Assume the previous constructions and definitions for hamiltonian (\ref{ham_bello}).\\ Suppose that there exist $m\in\mathbb{N}$ and three numbers $p,q_1,q_2\in\left]0,\displaystyle\frac{2}{3}\right[$
satisfying 
\begin{align}
2\upsilon_0(m)<q_1\ ,\ \ \ 
2\Upsilon_0(m)<q_2\ ,\ \ \ 
2\zeta_0(m)< p\ .
\end{align}
Fix $\varepsilon$ sufficiently small so that one can pick two positive real numbers $R$, $\xi_0$ such that
\begin{align}\label{size_initial_rad}
\begin{split}
C_1(R) & >0\\
\xi+\left(1-\frac{T\Eo}{2}\frac{1-q_2^m}{1-q_2}\right)u & >\xi_0\geq 0\\
\xi+\left(1-\frac{T\Eo}{2}\frac{1-q_2^m}{1-q_2}\right)u & \geq \sqrt{\Lambda_1^0+\Lambda_2^0+2\left(\rho+r+\frac{T\eo}{2}\frac{1-q_1^m}{1-q_1}r\right)-\mathcal{N}^-(\xi_0)}
\end{split},
\end{align}
where $C_1(R)$ denotes the quantity
\begin{align}\label{cunooo}
\begin{split}
 & \frac{\kappa}{2}  \left\{\left[\rho+r-\left(\frac{K}{\kappa}+1\right)\left(R+\frac{T\eo}{2}\frac{1-q_1^m}{1-q_1}r\right)\right]^2 -\left[\frac{K}{\kappa}\left(R+\frac{T\eo}{2}\frac{1-q_1^m}{1-q_1}r\right)\right]^2 \right\}\\
    & -\left(p\frac{1-p^m}{1-p}+2p^m\right) \nub{f_0}{3}-2\nub{g_0-\mathcal{G}}{3}
    \end{split}
\end{align}
and we have defined
\begin{align}
\begin{split}
\mathcal{N}^-(\xi_0) & :=(\Lambda_1^0-R)\sqrt{1-\bar{e}_{1}(0,\xi_0)^2}+(\Lambda_2^0-R)\sqrt{1-\bar{e}_{2}(0,\xi_0)^2}\ ,\\
\bar{e}_j(0,\xi_0) & :=\sqrt{1-\left(1-\frac{\xi_0^2}{2(\Lambda^0_1-R)}\right)^2}\ ,\ \ j\in\{1,2\}.
\end{split}
\end{align}
\ 
\newline
Then, for any initial condition 
\begin{equation}
(I(0),\vartheta(0),x(0),y(0))\in S_0\left(R\right)\times S_0\left(R\right)\times \mathbb{T}^2\times B_{0,0}(\xi_0)\times B_{0,0}(\xi_0)
\end{equation}
the flow of hamiltonian (\ref{ham_bello}) stays in the domain of analyticity $\mathcal{D}_{\rho,r,s,\xi,u}$ and there exist a positive constant $C_2$ and three functions $R_f:\mathbb{R}\longrightarrow\mathbb{R}$, $(\overline{e}_1(t,\xi_0),\overline{e}_2(t,\xi_0)):\mathbb{R}\longmapsto ]0,1[\times]0,1[$ such that for any time
\begin{equation}
|t| < \bar{t}:= \frac{C_1(R)}{C_2}q_1^{-m}\ ,
\end{equation}
one has
\begin{align}
\begin{split}
\nub{I(t)-I(0)}{S_0(R)\times S_0\left(R\right)} \leq R_f(t)&\\
\ \\
e_1(t)<\overline{e}_1(t,\xi_0)&\\ 
\ \\
e_2(t)<\overline{e}_2(t,\xi_0)&\ .
\end{split}
\end{align}
Moreover, such constant and functions can be can be computed explicitly and read:
\begin{align}
\begin{split}
C_2 := & r|\omega_1+\omega_2|\eo\ ,\\
\ \\
R_f :\ &  \ t\longmapsto  \frac{K}{\kappa} \tr +\sqrt{\left(\frac{K}{\kappa} \tr\right)^2+a(t)}+\frac{T\eo}{2}\frac{1-q_1^m}{1-q_1}r\ ,\\
\ \\
\overline{e}_1:\ & (t,\xi_0)\longmapsto \sqrt{1-\left(\frac{\mathcal{N}^{-}(\xi_0)-\Lambda_2^0-R_f(t)}{\Lambda_1^0+R_f(t)}\right)^2}\ ,\\
\ \\
\overline{e}_2:\ &  (t,\xi_0)\longmapsto \sqrt{1-\left(\frac{\mathcal{N}^{-}(\xi_0)-\Lambda_1^0-R_f(t)}{\Lambda_2^0+R_f(t)}\right)^2}\ .
\end{split}
\end{align}
where we have defined  
\begin{align*}
a(t):=&\frac{2}{\kappa}\left[\left(p\frac{1-p^m}{1-p}+2p^m\right) \nub{f_0}{3}+2\nub{g_0-\mathcal{G}}{3}\ + C_2\ q_1^m|t|\right]\\
\tr:=&R+\frac{T\eo}{2}\frac{1-q_1^m}{1-q_1}r\ .
\end{align*}

\end{ac_stability}
\
\\
The proof of such result can be split into two parts which insure, respectively, stability in the action variables and confinement in the cartesian ones.
\subsubsection{Confinement of the actions}
\newtheorem{stability}[ac_stability]{Theorem (Stability of the action variables)}
\begin{stability}\label{stability}
Assume the constructions of the previous section for hamiltonian (\ref{ham_bello}).\\
Suppose that there exist $m\in\mathbb{N}$ and three numbers $p,q_1,q_2\in\left]0,\displaystyle\frac{2}{3}\right[$
satisfying 
\begin{align}\label{hyp_stab_2}
2\upsilon_0(m)<q_1\ ,\ \ \ 
2\Upsilon_0(m)<q_2\ ,\ \ \ 
2\zeta_0(m)< p\ .
\end{align}
Fix $\varepsilon$ sufficiently small so that one can pick a positive real number $R$ such that
\begin{align}\label{size_initial_rad}
\begin{split}
C_1(R) & >0\\
\end{split},
\end{align}
where $C_1(R)$ is defined as in (\ref{cunooo}).
\ 
\newline
Then there exist a positive constant $C_2$ and a function $R_f:\mathbb{R}\longrightarrow\mathbb{R}$ such that, for a solution with initial actions satisfying
\begin{equation}
I(0)\in S_0\left(R\right)\times S_0\left(R\right)
\end{equation}
and for any time 
\begin{equation}
|t| < \min\left\{\frac{C_1(R)}{C_2}q_1^{-m},t_{esc}\right\}\ ,
\end{equation}
where $t_{esc}$ is the time of escape from $\mathcal{D}_{1-\frac{T\eo}{2}\frac{1-q_1^m}{1-q_1},1-\frac{T\Eo}{2}\frac{1-q_2^m}{1-q_2}}$,
one has
\begin{align}
\begin{split}
\nub{I(t)-I(0)}{S_0(R)\times S_0\left(R\right)} \leq R_f(t)\ . \\ 
\end{split}
\end{align}
Moreover, the constant $C_2$ and the function $R_f$ can be explicitly computed and read:
\begin{align}\label{C1}
\begin{split}
C_2 := & r|\omega_1+\omega_2|\eo\ ,\\
\ \\
R_f :\ \ &  \ t\longmapsto  \frac{K}{\kappa} \tr +\sqrt{\left(\frac{K}{\kappa} \tr\right)^2+a(t)}+\frac{T\eo}{2}\frac{1-q_1^m}{1-q_1}r\ ,\\
\end{split}
\end{align}
where we have denoted  
\begin{align}
\begin{split}
a(t) & :=\frac{2}{\kappa}\left[\left(p\frac{1-p^m}{1-p}+2p^m\right) \nub{f_0}{3}+2\nub{g_0-\mathcal{G}}{3}\ + C_2\ q_1^m|t|\right]\\
\tr & := R+\frac{T\eo}{2}\frac{1-q_1^m}{1-q_1}r\ .
\end{split}
\end{align}

\end{stability}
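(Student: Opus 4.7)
The plan is to combine a resonant normal form construction in the spirit of Pöschel with the geometric confinement argument of Lochak based on quasi-convexity. First I would iteratively average out the non-resonant part $f_0$ along the periodic flow of $h$ to depth $m$, producing at each step a canonical transformation close to the identity. Then I would exploit the approximate conservation of $H_K$ under the final normalized flow, together with the quasi-convexity of $H_K$, to confine $I(t)$ in terms of its initial value.

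More precisely, at step $k\in\{0,\dots,m-1\}$ the hamiltonian has the form $h+\mathcal{G}+g_k+f_k$ with $\{h,g_k\}=0$ and $\langle f_k\rangle_h=0$. To kill $f_k$ to leading order I would introduce an auxiliary hamiltonian $\phi_{k+1}$ solving the cohomological equation $\{h,\phi_{k+1}\}+f_k=0$, which admits the explicit solution $\phi_{k+1}(z)=\frac{1}{T}\int_0^T t\,f_k\circ\Lambda_h^t\,dt$, and compose the flow $\Lambda_{\phi_{k+1}}^1$ with the hamiltonian. A direct computation rewrites the transformed hamiltonian as $h+\mathcal{G}+g_{k+1}+f_{k+1}$, and the two anisotropic vector-field norms $\ndb{X_{f_{k+1}}}{\cdot}$ and $\ntb{X_{f_{k+1}}}{\cdot}$ then satisfy contractions governed precisely by the functions $\upsilon_0$ and $\Upsilon_0$ (the distinction coming from the different Cauchy losses in the action/angle versus cartesian directions, encoded by the parameter $\beta$), while $\nub{f_{k+1}}{\cdot}$ contracts by the factor $\zeta_0$. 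The hypotheses $2\upsilon_0(m)<q_1$, $2\Upsilon_0(m)<q_2$ and $2\zeta_0(m)<p$ are precisely what is needed so that, iterating $m$ times on a sequence of nested domains (shrinking from $\mathcal{D}_3$ down to $\mathcal{D}_2$), one obtains the geometric decays $\nub{f_m}{2}\leq p^m\nub{f_0}{3}$, $\ndb{X_{f_m}}{2}\leq q_1^m\eo$ and $\ntb{X_{f_m}}{2}\leq q_2^m\Eo$, while the composition of the $m$ transformations remains close to the identity with action displacement bounded by the geometric sum $\frac{T\eo}{2}\frac{1-q_1^m}{1-q_1}r$ and analogously in the cartesian directions.

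With the normal form in hand, I would write the variation of the integrable energy along the transformed flow as an upper bound of the form $|H_K(I(t))-H_K(I(0))|\leq 2\nub{g_0-\mathcal{G}}{3}+(p\frac{1-p^m}{1-p}+2p^m)\nub{f_0}{3}+C_2\,q_1^m|t|$, in which the first two terms bound the difference between $H_K$ and the conserved quantity $H_K+\mathcal{G}+g_m$ along the transformed trajectory, while the last term comes from integrating $\{H_K,f_m\}=\langle\omega,\partial_I f_m\rangle$ over time, producing the constant $C_2=r|\omega_1+\omega_2|\eo$. Lochak's quasi-convexity lemma applied to $H_K$, whose Hessian spectrum lies in $[\kappa,K]$, then converts this energy bound into a quadratic estimate on the action drift in the transformed variables of the form $\frac{K}{\kappa}\tr+\sqrt{(\frac{K}{\kappa}\tr)^2+a(t)}$, to which the geometric-sum displacement $\frac{T\eo}{2}\frac{1-q_1^m}{1-q_1}r$ coming from the normalizing transformation must be added to return to the original coordinates, producing exactly $R_f(t)$. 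The positivity of $C_1(R)$ is precisely the condition ensuring that this drift stays strictly inside $S_0(r)\times S_0(r)$ for all times $|t|<C_1(R)/C_2\cdot q_1^{-m}$, so that the normal form remains valid throughout and can be safely inverted.

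The main obstacle will be the careful bookkeeping during the iterative normal form: one must simultaneously track two anisotropic vector-field norms on a sequence of nested analyticity domains, verify that the geometric decay rates $q_1,q_2,p$ are indeed enforced by the precise functional inequalities encoded in $\upsilon_0,\Upsilon_0,\zeta_0$, and control the cumulative displacement of the successive transformations without losing the closeness of the transformed initial condition to the original one. Once this perturbative machinery is in place, the final Lochak-type step converting approximate energy conservation into action confinement is a short convexity argument.
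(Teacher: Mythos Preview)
Your proposal is essentially correct and follows the same approach as the paper: an iterated P\"oschel-type resonant normal form (the paper's Lemmas~\ref{nf_lemma} and~\ref{iterative}) reducing $f_0$ geometrically, followed by Lochak's convexity confinement via a Taylor expansion of $H_K$ combined with energy conservation. Two minor slips to correct: the nested domains shrink from $\mathcal{D}_3$ down to $\mathcal{D}_1$ (not $\mathcal{D}_2$), and the term $C_2\,q_1^m|t|$ does not enter the bound on $|H_K(\tilde I(t))-H_K(\tilde I(0))|$ but rather bounds the resonant projection $|\langle\omega,\tilde I(t)-\tilde I(0)\rangle|$ of the \emph{linear} term in the Taylor expansion of $H_K$, obtained by integrating $\langle\omega,\dot{\tilde I}\rangle=-\langle\omega,\partial_\vartheta f_m\rangle$ and using $\{h,g_m\}=0$ (your formula $\{H_K,f_m\}=\langle\omega,\partial_I f_m\rangle$ is not the relevant identity).
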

\ 
\newline
In order to prove theorem \ref{stability}, one must firstly put hamiltonian (\ref{ham_bello}) into resonant normal form by applying a transformation which is described in the next
\newtheorem{normal_form}[ac_stability]{Lemma (Resonant Normal Form)}
\begin{normal_form}\label{nf_lemma}
Let $H_0$ be an hamiltonian function, analytical in $\mathcal{D}_{3}$, which can be decomposed as follows:
$$
H_0=h+g_0+f_0\ ,
$$
where $h:= \langle \omega,I \rangle$ is integrable with a $T-$periodic frequency vector $\omega$, $g_0$ is in involution with $h$, i.e. $\{h,g_0\}=0$, and $\langle f_0 \rangle_h = 0$.
\newline
Assume the existence of an integrable hamiltonian $\mathcal{G}$ and of five real numbers, $\eo$, $\go$, $\dd$, $\Eo$ and $\Go$, such that
\begin{align}\label{hyp_nf_0}
\begin{split}
\ndb{\ch{f_0}}{3} \leq \eo,\ \ \ \ndb{\ch{g_0}-\ch{\mathcal{G}}}{3} \leq \go,\ \ \ \ndb{\ch{\mathcal{G}}}{3}\leq \dd\\
\ntb{\ch{f_0}}{3} \leq \Eo,\ \ \ \ndb{\ch{g_0}-\ch{\mathcal{G}}}{3} \leq \Go\ .
\end{split}
\end{align}
Assume, also, that there exist $m\in\mathbb{N}$ and three numbers $p,q_1,q_2\in\left]0,\displaystyle\frac{2}{3}\right[$
satisfying 
\begin{equation}\label{hyp_nf_2}
2\upsilon_0(m)<q_1\ ,\ \ 2\Upsilon_0(m)<q_2\ ,\ \ 2\zeta_0(m)< p\ .
\end{equation}
Then there exist a symplectic transformation $\Psi_m$, analytic and real-valued for any real argument
$$
\Psi_m: \dom{1}\longrightarrow\dom{1+\frac{T\eo}{2}\frac{1-q_1^m}{1-q_1},1+\frac{T\Eo}{2}\frac{1-q_2^m}{1-q_2}}\ ,
$$
whose size is 
\begin{equation}\label{size_nf}
\ndb{\Psi_m-id}{1}\leq \frac{T\eo}{2}\frac{1-q_1^m}{1-q_1}\ ,\ \ \ \ntb{\Psi_m-id}{1}\leq \frac{T\Eo}{2}\frac{1-q_2^m}{1-q_2}\ ,
\end{equation}
such that
$$
H_m:= H_0\circ\Psi_m=h+g_m+f_m\ ,
$$
where $\{h,g_m\}=0\ $ and $\langle f_m\rangle_h=0\ $.
\newline
Furthermore, the following estimates hold
\begin{align}\label{ch_nf}
\begin{split}
\ndb{\ch{f_m}}{1}\leq \ q_1^m\eo \ \ \ &\ndb{\ch{g_m}-\mathcal{G}}{1}\leq \ \go+\frac{q_1}{2}\frac{1-q_1^m}{1-q_1}\eo\\
\ntb{\ch{f_m}}{1}\leq \ q_2^m\Eo \ \ \ &\ntb{\ch{g_m}-\mathcal{G}}{1}\leq \Go+\frac{q_2}{2}\frac{1-q_2^m}{1-q_2}\Eo\ ,\\
\nub{f_m}{1}\leq p^m\nub{f_0}{3} \ \ \ &\nub{g_m-\mathcal{G}}{1}\leq \ \frac{p}{2}\frac{1-p^{m}}{1-p}\nub{f_0}{3}+\nub{g_0-\mathcal{G}}{3}\ .\\
\end{split}
\end{align}
\end{normal_form}
\
\newline
This lemma is proven by iterating $m$ times the following result which is, in turn, an improved version of a result contained in \cite{Poschel_1999}. All constant are made explicit here and we have tried to sharpen all the estimates as much as possible.
\newtheorem{iterative}[ac_stability]{Lemma (Single perturbative iteration)}
\begin{iterative}\label{iterative}
Let $H_0$ be a hamiltonian function, analytical in $\mathcal{D}_{1}$, for which the following decomposition holds:
$$
H_0=h+g_0+f_0\ ,
$$
where $h:= \langle \omega,I \rangle$ is integrable and has a $T-$periodic frequency vector $\omega$, $g_0$ is in involution with $h$, i.e. $\{h,g_0\}=0$, and $\langle f_0 \rangle_h = 0$.
\newline
Assume the existence of five real numbers $\eo$, $\go$, $\dd$, $\Eo$ and $\Go$ such that
\begin{align}\label{hyp_it_0}
\begin{split}
\ndbfo \leq \eo,\ \ \ \ndbgo \leq \go,\ \ \ \ndbgc\leq \dd\\
\ntbfo \leq \Eo,\ \ \ \ntbgo \leq \Go\ .
\end{split}
\end{align}
Furthermore, suppose that for a real number $\auno \in \left]0,\displaystyle\frac{1}{2}\right[$ one has 
\begin{equation}\label{hyp_it_1}
\frac{T\eo}{2\auno}< 1\ . 
\end{equation}
Then there exist a symplectic analytical transformation $\Phi_1$ with generating function $\phi_1$,
$$
\Phi_1: \dom{1-2\auno}\longrightarrow \dom{1-\auno}\ ,
$$
which is real valued for any real argument and whose size is
\begin{equation}\label{size_it}
||\Phi_1-id||_{1-2\auno}\leq \frac{T\eo}{2},\ \ \ |||\Phi_1-id|||_{1-2\auno}\leq \frac{T\Eo}{2}\ ,
\end{equation}
which takes the hamiltonian into the following form:
$$
H_1:= H_0\circ\Phi_1= h+g_1+f_1\ ,
$$
where $\{h,g_1\}=0$ and $\langle f_1\rangle_h =0$. 
\newline
\newline
As for vector fields estimates one has
\begin{equation}\label{bound_chd_f}
\ndbfuno \leq 2\upsilon_0\left(\frac{1}{\auno}\right)\ \eo,\ \ \ \ndbguno \leq \upsilon_0\left(\frac{1}{\auno}\right)\ \eo + \go
\end{equation}
and
\begin{equation}\label{bound_cht_f}
\ntbfuno \leq 2\Upsilon_0\left(\frac{1}{\auno}\right)\ \Eo,\ \ \ \ntbguno \leq \Upsilon_0\left(\frac{1}{\auno}\right)\ \Eo + \Go\ ,
\end{equation}
whereas functions are bounded by
\begin{equation}\label{bound_it_function}
\nubfuno \leq 2 \zeta_0\left(\frac{1}{\auno}\right)\ |f_0|_1,\ \ \ \nubguno \leq \zeta_0\left(\frac{1}{\auno}\right)\ |f_0|_1+\nub{g_0-\mathcal{G}}{1}\ .
\end{equation}
\end{iterative}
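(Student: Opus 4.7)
The approach is one-step periodic averaging by a Lie series, with generator chosen from the cohomological equation associated to the resonant torus. Explicitly, I would take
\begin{equation*}
\phi_1 := \frac{1}{T}\int_0^T t\, f_0\circ\flphit\,dt.
\end{equation*}
Integration by parts in $t$, together with the $T$-periodicity of $\flphit$ (so that boundary terms at $0$ and $T$ agree) and $\langle f_0\rangle_h=0$, yields $\{h,\phi_1\}=-f_0$. Because $h=\langle\omega,I\rangle$ generates a pure real translation of the angles, $\flphit$ preserves every domain $\dom{\cdot}$, and the elementary estimate $\tfrac{1}{T}\int_0^T t\,dt=T/2$ gives $|\phi_1|_3\le\tfrac{T}{2}|f_0|_3$, $\ndb{X_{\phi_1}}{3}\le\tfrac{T}{2}\eo$, $\ntb{X_{\phi_1}}{3}\le\tfrac{T}{2}\Eo$.

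Setting $\Phi_1:=\flphi$, the hypothesis (\ref{hyp_it_1}) makes the time-one flow a well-defined analytic real symplectomorphism $\Phi_1:\dom{1-2\auno}\longrightarrow\dom{1-\auno}$, with the size bound (\ref{size_it}) following from $\Phi_1-\mathrm{id}=\int_0^1 X_{\phi_1}\circ\Lambda_{\phi_1}^{\sigma}\,d\sigma$. The Lie-series identity $F\circ\Phi_1=F+\int_0^1\{F,\phi_1\}\circ\Lambda_{\phi_1}^\sigma\,d\sigma$ applied to $H_0$, combined with $\{h,\phi_1\}=-f_0$ and the Fubini-type rewriting $\int_0^1 f_0\circ\Lambda_{\phi_1}^\sigma\,d\sigma = f_0+\int_0^1(1-\sigma)\{f_0,\phi_1\}\circ\Lambda_{\phi_1}^\sigma\,d\sigma$, produces after cancellation
\begin{equation*}
H_0\circ\Phi_1 = h+g_0+\int_0^1\sigma\{f_0,\phi_1\}\circ\Lambda_{\phi_1}^\sigma\,d\sigma + \int_0^1\{g_0,\phi_1\}\circ\Lambda_{\phi_1}^\sigma\,d\sigma =: h+g_0+R_1.
\end{equation*}
The splitting $g_1:=g_0+\langle R_1\rangle_h$, $f_1:=R_1-\langle R_1\rangle_h$ then gives the desired decomposition, with $\{h,g_1\}=0$ and $\langle f_1\rangle_h=0$ by construction.

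Using $X_{\{F,G\}}=[X_F,X_G]$, the estimation of $X_{R_1}$ reduces to bounding the three commutators $[X_{f_0},X_{\phi_1}]$, $[X_{g_0-\mathcal G},X_{\phi_1}]$ and $[X_{\mathcal G},X_{\phi_1}]$. Each component is controlled by Cauchy inequalities on the nested domains $\dom{1}\supset\dom{1-\auno}\supset\dom{1-2\auno}$, which cost a factor $1/(\auno r)$, $1/(\auno s)$ or $1/(\auno u)$ per action, angle or cartesian derivative respectively; the anisotropy parameter $\beta=\sqrt{rs}/u$ then appears precisely in the mixed entries where an action-angle derivative meets a cartesian term. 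Composition with $\Lambda_{\phi_1}^\sigma$ preserves these bounds thanks to the inclusion above, while the weight $\sigma\in[0,1]$ in the first integral halves the corresponding contributions and produces the $T/2$ factors visible in (\ref{def_upsilon})--(\ref{def_Upsilon}). Collecting everything and absorbing the worst-case maxima into $\chi_0$ and $\Theta_0$ of (\ref{chi0}) gives $\ndb{X_{R_1}}{1-2\auno}\le\upsilon_0(1/\auno)\,\eo$ and the cartesian analogue with $\Upsilon_0,\Eo$. Since $\langle\cdot\rangle_h$ preserves sup-norms, one concludes $\ndb{X_{f_1}}{1-2\auno}\le 2\ndb{X_{R_1}}{1-2\auno}$ and $\ndb{X_{g_1-\mathcal G}}{1-2\auno}\le\ndb{X_{g_0-\mathcal G}}{1-2\auno}+\ndb{X_{R_1}}{1-2\auno}$, which are exactly (\ref{bound_chd_f})--(\ref{bound_cht_f}). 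The function-norm estimate (\ref{bound_it_function}) arises from the same scheme, but now with one fewer derivative per Poisson bracket, which is why only $\zeta_0$ (not $\upsilon_0$) appears.

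The hard part will be the bookkeeping of the last step: the anisotropic norms treat the $(I,\vartheta)$ and $(x,y)$ subspaces asymmetrically, so each component of the commutators $[X_F,X_{\phi_1}]$ must be estimated individually with a different combination of the Cauchy factors $1/(\auno r)$, $1/(\auno s)$, $1/(\auno u)$ and of the cross-factor $\beta$. Matching these to the very specific polynomial forms of $\upsilon_0$, $\Upsilon_0$ and $\zeta_0$—rather than to cruder isotropic bounds—is what makes the estimates sharp, and is exactly why the auxiliary constants $\chi_0,\Theta_0$ are introduced to collapse the various maxima in a compact way.
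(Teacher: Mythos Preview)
Your overall architecture---solve the homological equation, take the time-one flow, write the remainder in integral form, split it into resonant and non-resonant parts---matches the paper exactly, and the function-norm estimate (\ref{bound_it_function}) goes through as you say. There are two issues, one cosmetic and one substantive.

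First, in your definition of $\phi_1$ you write $f_0\circ\Lambda_{\phi_1}^t$; this is circular. You mean $f_0\circ\Lambda_h^t$, the flow of the linear part $h$. The subsequent sentences (``$T$-periodicity'', ``pure real translation of the angles'') make clear you intend $\Lambda_h^t$, so this is only a notational slip.

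The substantive gap is in the vector-field estimates. You write that ``composition with $\Lambda_{\phi_1}^\sigma$ preserves these bounds thanks to the inclusion above''. That is true for \emph{sup-norms of functions}, which is why your argument for (\ref{bound_it_function}) is fine. It is \emph{not} true for the Hamiltonian vector field of a composition: from $r_1=\int_0^1\{\phi_1,g_0+tf_0\}\circ\Lambda_{\phi_1}^t\,dt$ one gets
\[
X_{r_1}=\int_0^1 \mathcal{M}\,\bigl([X_{\phi_1},X_{g_0+tf_0}]\circ\Lambda_{\phi_1}^t\bigr)\,dt,\qquad \mathcal{M}:=\mathcal{J}(D\Lambda_{\phi_1}^t)^\dag\mathcal{J}^{-1},
\]
because differentiating the composition brings in the Jacobian of the flow. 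The paper bounds this $8\times 8$ matrix $\mathcal{M}$ componentwise (Appendix~\ref{proof_est}); its diagonal entries are $1+\frac{T\eo}{2\auno}$ or $1+\frac{T\Eo}{2\auno}$ and the off-diagonal entries carry the anisotropy factors $\beta,\sqrt{r/s}$, etc. This matrix is precisely the source of the quadratic-in-$x$ terms $(Tx)^2\eo\chi_0$, $Tx^2\Theta_0(\cdots)$, $(Tx\Eo/\beta)^2\chi_0/\eo$, \ldots\ in $\upsilon_0$ and $\Upsilon_0$. Bounding only the Lie bracket, as you propose, yields one factor of $1/\auno$ from Cauchy and therefore only the linear-in-$x$ terms; you cannot reproduce the stated $\upsilon_0,\Upsilon_0$ without controlling $\mathcal{M}$. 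So the step ``collecting everything gives $\ndb{X_{R_1}}{1-2\auno}\le\upsilon_0(1/\auno)\eo$'' does not follow from what precedes it.
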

\ 
\newline
This lemma is proven by making use of some sharp techniques of perturbation theory.
\begin{proof}[Proof]
We look for a transformation $\Phi_1$ which is the symplectic flow at time $t=1$ of a generating function $\phi_1$, so that the original hamiltonian takes the form
\begin{align}\label{ham_trans1}
\begin{split}
H_1 & = H_0\circ\Phi_1=
e^{L_{\phi_1}}(h+g_0+f_0) \\ 
& =h+g_0+f_0
+L_{\phi_1}(h)
+\sum_{n\geq 2}\frac{1}{n!}L^n_{\phi_1}(h)+\sum_{n\geq 1}\frac{1}{n!}L^n_{\phi_1}(g_0+f_0)\ ,
\end{split}
\end{align}
and we impose the homological equation
\begin{equation}\label{homological}
\{\phi_1,h\}=-f_0\ ,
\end{equation}
whose solution is 
\begin{equation}\label{phi_1}
\phi_1=\frac{1}{T}\integ{0}{T}{tf_0\circ\Lambda_h^t}{t}=\frac{1}{T}\integ{0}{T}{tf_0(I,\vartheta+\omega t, x, y)}{t}\ .
\end{equation}
In this way, the transformed hamiltonian reads
\begin{align}\label{ham_trans2}
\begin{split}
H_1
&=h+g_0
+\sum_{n\geq 2}\frac{1}{n!}L^n_{\phi_1}(h)+\sum_{n\geq 1}\frac{1}{n!}L^n_{\phi_1}(g_0+f_0)\\
&=h+g_0+r_1\ , 
\end{split}
\end{align}
where
\begin{equation}\label{reste}
r_1:=\int_0^1\{\phi_1,g_0+tf_0\}\circ\Lambda^t_{\phi_1}\ dt
\end{equation}
is the integral form for the remainder.
\newline
Furthermore, if we define
\begin{equation}\label{def_g1}
g_1:= g_0+\langle r_1\rangle_h
\end{equation}
and 
\begin{equation}\label{def_f1}
f_1:= r_1 - \langle r_1\rangle_h\ 
\end{equation}
the following resonant decomposition holds
$$
H_1=h+g_1+f_1\ ,\ \ \{h,g_1\}=0\ ,\ \ \langle f_1\rangle_h=0\ .
$$
Now, in order to prove that the flow $\flphit$ starting from $\dom{1-2\auno}$ stays in $\dom{1-\auno}$ for $|t|\leq 1$, we define the time of escape
\begin{equation}\label{escape_time}
t^*:= \inf \{t\in \mathbb{R} \text{ s.t. } \flphit(\dom{1-2\auno})\notin\dom{1-\auno} \}
\end{equation}
and we find the following estimates for the hamiltonian vector field $X_{\phi_1}$ associated to $\phi_1$:
\begin{equation}\label{chd_phi1}
||\ch{\phi_1}||_{1}= \ndb{\frac{1}{T}\integ{0}{T}{t\ch{f_0}(I,\vartheta+\omega t,x,y)}{t}}{1}
\leq \frac{T}{2}\ndb{\ch{f_0}}{1}\leq \frac{T\eo}{2}
\end{equation}
and, analogously,
\begin{equation}\label{cht_phi1}
\ntb{\ch{\phi_1}}{1}\leq \frac{T\Eo}{2}\ .
\end{equation}
Let us consider, as an example, an escape from $\dom{1-\auno}$ of the action component of the flow with initial conditions in $\dom{1-2\auno}$, i.e.
$$
\auno r \leq \normsup{(\flphits-id)^{I_j}}{1-2\auno}\ ,\ \ j\in\{1,2\}\ .
$$
By making use of some straightforward inequalities, one easily sees that
\begin{align*}
\begin{split}
\auno r & \leq \normsup{(\flphits-id)^{I_j}}{1-2\auno} \leq \integ{0}{t^*}{|X_{\phi_1}^{I_j}|_{1-2\auno}}{t}\ \ \ \ \ j\in\{1,2\}\\
& \leq \ndb{\ch{\phi_1}}{1}r|t^*|\leq \frac{T\eo}{2}r|t^*|\ ,
\end{split}
\end{align*}
which is equivalent to
$$
\frac{T\eo}{2\auno}|t^*|\geq 1\ ,
$$
so that, by hypothesis (\ref{hyp_it_1}), one gets $|t^*|>1$.\newline
In a completely analogous way one proves a similar result for the angles $\vartheta_j$ and for the cartesian variables $(x_j,y_j)$ and is thus insured that
$$
\Phi_1:\dom{1-2\auno}\longrightarrow\dom{1-\auno}\ .
$$
The discussion above, together with estimates (\ref{chd_phi1}) and (\ref{cht_phi1}) implies
\begin{align}\label{sizephi}
\begin{split}
\ndb{\Phi_1-id}{1-2\auno} & \leq \frac{T\eo}{2}\ ,\\
\ntb{\Phi_1-id}{1-2\auno} & \leq \frac{T\Eo}{2}\ .
\end{split}
\end{align}
Finally, in order to prove estimates (\ref{bound_chd_f}) and (\ref{bound_cht_f}) in the statement, we consider the symplectic field associated to the remainder in expression (\ref{reste}), namely
\begin{align}\label{reste_ch}
\begin{split}
\ch{r_1} & = \integ{0}{1}{\jj \left(D\flphit\right)^{\dag}\left[\nabla\left(\{\phi_1,g_0+tf_0\}\right)\circ\flphit\right]}{t}\\
& =\integ{0}{1}{\jj \left(D\flphit\right)^{\dag}\jj^{-1}\jj\left[\nabla\left(\{\phi_1,g_0+tf_0\}\right)\circ\flphit\right]}{t}\\
& =\integ{0}{1}{\mathcal{M}\left(\left[\ch{\phi_1},\ch{g_0+tf_0}\right]\circ\flphit\right)}{t}\ ,
\end{split}
\end{align}
where we have defined the matrix $\mathcal{M}:=\jj \left(D\flphit\right)^{\dag}\jj^{-1}$ and we have used the fact that the symplectic gradient of a Poisson bracket yields the Lie bracket (see e.g. \cite{McDuff_Salamon_2017} for a proof of this statement). 
\newline
We show in appendix \ref{proof_est} that estimates (\ref{bound_chd_f}) and (\ref{bound_cht_f}) follow immediately from definitions (\ref{def_g1}) and (\ref{def_f1}) and from expression (\ref{reste_ch}), provided that one gives a bound to the matrix $\mathcal{M}$ with the help of the Cauchy inequalities, and a bound to the Lie bracket by making use of an argument in \cite{Fasso_1990}. A similar procedure will yield a bound on the remainder (\ref{reste}) which, in turn, will imply inequalities (\ref{bound_it_function}), as we show in appendix \ref{proof_est} as well.
\end{proof}
\
\newline
We are now able to write the proof of the normal form lemma.
\begin{proof}
This lemma is proven by iterating $m$ times the machinery described in the proof of lemma (\ref{iterative}). Hypothesis (\ref{hyp_nf_2}) implies that condition (\ref{hyp_it_1}) holds with $\auno=1/m$. Therefore, the iterative lemma can be applied and yields
$$
\ndb{\ch{f_1}}{3-\frac{2}{m}}\leq q_1\eo\ ,\ \ \ \ndb{\ch{g_1}-\ch{\mathcal{G}}}{3-\frac{2}{m}}\leq \frac{q_1}{2}\eo+\go
$$
$$
\ntb{\ch{f_1}}{3-\frac{2}{m}}\leq q_2\Eo\ ,\ \ \ \ntb{\ch{g_1}-\ch{\mathcal{G}}}{3-\frac{2}{m}}\leq \frac{q_2}{2}\Eo+\Go\ .
$$
If $m=1$ the proof ends here.
\newline
If $m>1$, one just needs to prove that if the statement is true after $l<m$ applications of the iterative lemma, then it is also stands true after a $l+1$-th application. Thus, we suppose that after $l<m$ iterations we have 
\begin{align}\label{init}
\begin{split}
\ndb{\ch{f_l}}{3-\frac{2l}{m}} & \leq q_1^l\eo:=\eta_l\\\
\ndb{\ch{g_l}-\ch{\mathcal{G}}}{3-\frac{2l}{m}} & \leq \frac{q_1}{2}\frac{1-q_1^l}{1-q_1}\eo+\go:= \gamma_l\\
\ntb{\ch{f_l}}{3-\frac{2l}{m}} & \leq q_2^l\Eo:=\Xi_l \\
\ntb{\ch{g_l}-\ch{\mathcal{G}}}{3-\frac{2l}{m}} & \leq \frac{q_2}{2}\frac{1-q_2^l}{1-q_2}\Eo+\Go:=\Gamma_l\ .
\end{split}
\end{align}
\
\newline
Now, the aim is to apply the iterative lemma again with inequalities (\ref{init}) as initial estimates. Hypothesis (\ref{hyp_it_1}) still holds because, since $0<q_1<2/3$, one has
$$
\frac{Tm\eta_l}{2}:= q^l\frac{Tm\eo}{2}< 1\ ,
$$
so that, after having applied the iterative lemma once more, one is left with a hamiltonian in the following form:
$$
H_l:= H_0\circ\Phi_1\circ...\circ\Phi_l\circ\Phi_{l+1}=h+g_{l+1}+f_{l+1}\ ,
$$
where $\Phi_j$ is the symplectic transformation used at the $j$-th iteration of lemma (\ref{iterative}), and
$$
\{h,g_{l+1}\}=0,\ \ \ \langle f_{l+1}\rangle_h=0\ .
$$
As for the estimates on vector fields one has
$$
\ndb{\ch{f_{l+1}}}{3-\frac{2(l+1)}{m}}\leq 2\upsilon_l(m)\eta_l \ ,\ \ \ \ndb{\ch{g_{l+1}}-\ch{g_l}}{3-\frac{2(l+1)}{m}}\leq \upsilon_l(m)\eta_l 
$$
and
$$
\ntb{\ch{f_{l+1}}}{3-\frac{2(l+1)}{m}}\leq 2\Upsilon_l(m)\Xi_l \ ,\ \ \ \ntb{\ch{g_{l+1}}-\ch{g_l}}{3-\frac{2(l+1)}{m}}\leq \Upsilon_l(m)\Xi_l \ ,
$$
where the functions $\Upsilon_l$ and $\upsilon_l$ are defined exactly as in expressions (\ref{def_upsilon}) and (\ref{def_Upsilon}) by changing all the initial quantities $\eo$, $\go$, $\Eo$, $\Go$ with $\eta_l$, $\gamma_l$, $\Xi_l$, $\Gamma_l$. 
\newline
Thanks to assumption (\ref{hyp_nf_2}), one can easily check that
$$
2\upsilon_l(m)<2\upsilon_0(m)<q_1\ ,\ \ \ 2\Upsilon_l(m)<2\Upsilon_0(m)<q_2\ ,
$$
so that
$$
\ndb{\ch{f_{l+1}}}{3-\frac{2(l+1)}{m}}< q_1^{l+1}\eta_0 \ ,\ \ \ \ndb{\ch{g_{l+1}}-\ch{g_l}}{3-\frac{2(l+1)}{m}}< \frac{q_1^{l+1}}{2}\eta_0
$$
and
$$
\ntb{\ch{f_{l+1}}}{3-\frac{2(l+1)}{m}}< q_2^{l+1}\Xi_0 \ ,\ \ \ \ntb{\ch{g_{l+1}}-\ch{g_l}}{3-\frac{2(l+1)}{m}}< \frac{q_2^{l+1}}{2}\Xi_0\ .
$$
It is now easy to obtain the estimates on the resonant part of the perturbation:
\begin{align*}
\ndb{\ch{g_{l+1}}-\ch{\mathcal{G}}}{3-\frac{2(l+1)}{m}}\leq & \sum_{j=0}^{l}\ndb{\ch{g_{j+1}}-\ch{g_j}}{3-\frac{2(l+1)}{m}}  + \ndb{\ch{g_{0}}-\ch{\mathcal{G}}}{3-\frac{2(l+1)}{m}}\\
< & \sum_{j=0}^{l}\frac{q_1^{j+1}}{2}\eo+\gamma_0
= \frac{q_1}{2}\frac{1-q_1^{l+1}}{1-q_1}\eo +\go\ ,
\end{align*}
and analogously
$$
\ntb{\ch{g_{l+1}}-\ch{\mathcal{G}}}{3-\frac{2(l+1)}{m}}< \frac{q_2}{2}\frac{1-q_2^{l+1}}{1-q_2}\Eo +\Go\ .
$$
\newline
\newline
The same inductive scheme applies when calculating the size of the transformation
$$
\Psi_m:= \Phi_1\circ\Phi_2 \circ ...\circ\Phi_m\ .
$$ 
Indeed, for one application of the iterative lemma we have
$$
\ndb{\Phi_1-id}{3-\frac{2}{m}}\leq \frac{T\eo}{2}
$$
and
$$
\ntb{\Phi_1-id}{3-\frac{2}{m}}\leq \frac{T\Eo}{2}\ .
$$
In the non-trivial case $m>1$ we assume that, after $l<m$ applications, we have obtained
$$
\ndb{\Psi_l-id}{3-\frac{2l}{m}}\leq \frac{T\eo}{2}\frac{1-q_1^l}{1-q_1}
$$
and
$$
\ntb{\Psi_l-id}{3-\frac{2l}{m}}\leq \frac{T\Eo}{2}\frac{1-q_2^l}{1-q_2}\ ,
$$
where we denote
$$
\Psi_l:= \Phi_1\circ...\circ \Phi_l\ .
$$
Then, by applying lemma \ref{iterative} once more, one has
$$
\ndb{\Phi_{l+1}-id}{3-\frac{2(l+1)}{m}}\leq \frac{T\eta_l}{2}:= q_1^{l}\frac{T\eo}{2}\ ,
$$
which, in turn, implies 
\begin{align*}
\ndb{\Psi_{l+1}-id}{3-\frac{2(l+1)}{m}}
 \leq & \ndb{\Psi_{l+1}-\Psi_l}{3-\frac{2(l+1)}{m}}+ \ndb{\Psi_{l}-id}{3-\frac{2(l+1)}{m}} \\
\leq & q_1^{l}\frac{T\eo}{2} + \frac{T\eo}{2}\frac{1-q_1^l}{1-q_1}
= \frac{T\eo}{2}\frac{1-q_1^{l+1}}{1-q_1}\ .
\end{align*}
With a simliar computation one also gets 
\begin{align*}
\ntb{\Psi_{l+1}-id}{3-\frac{2(l+1)}{m}}\leq\frac{T\Eo}{2}\frac{1-q_2^{l+1}}{1-q_2}\ .
\end{align*}
As for the estimates on functions, a single application of the iterative lemma yields, by expression (\ref{bound_it_function}),
\begin{equation}
\nub{f_1}{3-\frac{2}{m}} \leq 2 \zeta_0(m) |f_0|_3,\ \ \ \nub{g_1-\mathcal{G}}{3-\frac{2}{m}} \leq \zeta_0(m)|f_0|_3+\nub{g_0-\mathcal{G}}{3}\ ,
\end{equation}
and, by hypothesis (\ref{hyp_nf_2}), this implies 
$$
\nub{f_1}{3-\frac{2}{m}} \leq p |f_0|_3,\ \ \ \nub{g_1-\mathcal{G}}{3-\frac{2}{m}} \leq \frac{p}{2}|f_0|_3+\nub{g_0-\mathcal{G}}{3}\ .
$$
Suppose, once again, that after $1\leq l<m$ iterations of lemma \ref{iterative} one has
\begin{equation}\label{f_it_2}
\nub{f_l}{3-\frac{2l}{m}} \leq p^l |f_0|_3,\ \ \ \nub{g_l-\mathcal{G}}{3-\frac{2l}{m}} \leq \frac{p}{2}\frac{1-p^l}{1-p}|f_0|_3+\nub{g_0-\mathcal{G}}{3}\ .
\end{equation}
By applying the iterative lemma \ref{iterative} one gets
\begin{equation}\label{f_it}
\nub{f_{l+1}}{3-\frac{2(l+1)}{m}}\leq 2\zeta_l(m) \nub{f_l}{3-\frac{2l}{m}}
\end{equation}
where $\zeta_l$ is defined exactly as in (\ref{def_zeta0}) by changing $\eo,\Eo,\go,\Go$ with $\eta_l,\Xi_l,\gamma_l,\Gamma_l$. By hypotheses (\ref{hyp_it_0}) one has
$$
\zeta_l(m)<\zeta_0(m)<p\ ,
$$
so that formulas (\ref{f_it_2}) and (\ref{f_it}) imply
\begin{equation}
\nub{f_{l+1}}{3-\frac{2(l+1)}{m}}\leq p \nub{f_l}{3-\frac{2l}{m}}\leq p^{l+1}\nub{f_0}{3}
\end{equation}
and
\begin{align}\label{g_it}
\begin{split}
\nub{g_{l+1}-\mathcal{G}}{3-\frac{2(l+1)}{m}}\leq & \nub{g_{l+1}-g_l}{3-\frac{2(l+1)}{m}}+ \nub{g_l-\mathcal{G}}{3-\frac{2(l+1)}{m}}\\
\leq & \frac{p}{2}^{l+1}\nub{f_0}{3} + \frac{p}{2} \frac{1-p^l}{1-p}\nub{f_0}{3}+\nub{g_0-\mathcal{G}}{3}\\
\leq & \frac{p}{2}\frac{1-p^{l+1}}{1-p}\nub{f_0}{3}+\nub{g_0-\mathcal{G}}{3}\ .
\end{split}
\end{align}
Finally, after $m$ iterations, one is left with
\begin{equation}
\nub{f_{m}}{1}\leq p^{m}\nub{f_0}{3}
\end{equation}
and 
\begin{equation}
\nub{g_{m}-\mathcal{G}}{1}\leq \frac{p}{2}\frac{1-p^{m}}{1-p}\nub{f_0}{3}+\nub{g_0-\mathcal{G}}{3}\ .
\end{equation}
\end{proof}
\ 
\newline
Together with lemmas (\ref{nf_lemma}) and (\ref{iterative}) come two important corollaries which will also be useful to prove the statement of theorem \ref{stability}. Namely, we have that the transformation $\Phi_1$ defined in the iterative lemma \ref{iterative} is invertible, as the following corollary shows.
\newtheorem{corol_it}[ac_stability]{Corollary (Single perturbative iteration)}
\begin{corol_it}\label{corol_it}
The transformation $\Phi_1$ defined in lemma (\ref{iterative}) is invertible and 
\begin{align}
\begin{split}
&\Phi_1^{-1}:\dom{1-\auno}\longrightarrow\dom{1}\\ 
\ \\
&(I,\vartheta,x,y)\longmapsto \Lambda_{\phi_1}^{-1}(I,\vartheta,x,y)\ .
\end{split}
\end{align}
Furthermore, the inverse function $\Phi_1^{-1}$ has the same size of $\Phi_1$:
\begin{align}\label{size_inv}
\begin{split}
\ndb{\Phi_1^{-1}-id}{1-\auno}\leq & \frac{T\eo}{2}\ \\
\ \\
\ntb{\Phi_1^{-1}-id}{1-\auno}\leq & \frac{T\Eo}{2}\ .
\end{split}
\end{align}
\end{corol_it}
\ 
\newline
The same result holds for the normal form transformation in lemma \ref{nf_lemma}. Namely, we have
\newtheorem{corol_nf}[ac_stability]{Corollary}
\begin{corol_nf}\label{Corollary_nf}
The transformation $\Psi_m$ defined in the normal form lemma is invertible and
\begin{align}
\begin{split}
\Psi_m^{-1}:\dom{1-\frac{T\eo}{2}\frac{1-q_1^m}{1-q_1},1-\frac{T\Eo}{2}\frac{1-q_2^m}{1-q_2}}\longrightarrow \dom{1}\\
\ \\
\Psi_m^{-1}:= \Phi_m^{-1}\circ ... \circ \Phi_1^{-1}\ ,
\end{split}
\end{align}
where $\Phi_j$ is the transformation involved at the $j$-th iteration of lemma (\ref{iterative}). 
\newline
Moreover, $\Psi_m^{-1}$ has the same size as $\Psi_m$, namely
\begin{align}\label{size_inv_2}
\begin{split}
\ndb{\Psi_m^{-1}-id}{1-\frac{T\eo}{2}\frac{1-q_1^m}{1-q_1},1-\frac{T\Eo}{2}\frac{1-q_2^m}{1-q_2}}\leq \frac{T\eo}{2}\frac{1-q_1^m}{1-q_1}\\
\ \\
\ntb{\Psi_m^{-1}-id}{1-\frac{T\eo}{2}\frac{1-q_1^m}{1-q_1},1-\frac{T\Eo}{2}\frac{1-q_2^m}{1-q_2}}\leq \frac{T\Eo}{2}\frac{1-q_2^m}{1-q_2}\ .
\end{split}
\end{align}

\end{corol_nf}
\
\newline
These two corollaries are proven in appendix \ref{Prova_corollari}.
\newline
Now, the proof of theorem (\ref{stability}) exploits a geometrical argument in order to get stability of the action variables. More precisely, variations of the projection on the line spanned by $\omega$ of the action variables are only due to the non-resonant part of the perturbation, whose magnitude has been diminished thanks to the resonant normal form developed in lemma \ref{nf_lemma}, whereas the convexity of $H_K$ bounds the diffusion in the direction orthogonal to $\omega$.  
\begin{proof}
Conditions (\ref{hyp_stab_2}) allow for the application of the normal form lemma to hamiltonian (\ref{ham_bello}). We denote the normalized coordinates with a $\tilde{\cdot}$ so that, after normalization, the hamiltonian is in the form
$$
H_m(\ti,\tth,\tx,\ty):= H\circ \Psi_m(\ti,\tth,\tx,\ty) = h(\ti)+g_m(\ti,\tth,\tx,\ty)+f_m(\ti,\tth,\tx,\ty)\ ,
$$
and estimates (\ref{ch_nf}) hold. Then, we consider the set $S_{0}(R)\times S_{0}(R)$ of initial conditions for the original non-normalized action variables $I$. Corollary \ref{Corollary_nf} insures that its image in the normalized variables is contained in the set $S_{0}(\tr)\times S_{0}(\tr)$ which in turn, by the first relation in (\ref{size_initial_rad}), is contained in the domain of the normal form. The same holds for the cartesian variables thanks to the last two inequalities in (\ref{size_initial_rad}).
\newline
We are now able to define the time of escape $\bar{t}$ of the sole action variables from the set $S_{0}(\rho+r)\times S_{0}(\rho+r)$ as the infimum time $\tau$ for which the following holds:
\begin{align}\label{tttt}
\left|\left[\Lambda^\tau_{H_m}(S_{0}(\tr)\times S_{0}(\tr)\times\mathbb{T}^2\times B_{0,0}(\xi_0)\times B_{0,0}(\xi_0))\right]^{\ti_j}\right|
\geq\rho+r
\end{align}
\newline
When considering a time $t<\bar{t}$, one can develop the flow 
$$
H_K(\ti)\circ\Lambda_{H_m}^t= [h(\ti)+\mathcal{G}(\ti)]\circ\Lambda_{H_m}^t
$$ in Taylor series with initial condition $\ti(0)\in S_{0}(\tr)\times S_{0}(\tr)$ and gets
\begin{align}
\begin{split}
H_K(\ti(t))= & H_K(\ti(0))+\left\langle\der{H_K}{\ti}(\ti(0)),\ti(t)-\ti(0)\right\rangle\\
+ & \frac{1}{2} \left(\ti(t)-\ti(0)\right)^\dag D^2 H_K(\tilde{I}^*)\left(\ti(t)-\ti(0)\right)\ ,
\end{split}
\end{align}
where $I^*$ is the point at which Lagrange's remainder is computed. 
\newline
Since the unperturbed hamiltonian $H_K$ is convex, we can write
\begin{align}\label{convexity}
\begin{split}
\left|H_K(\ti(t))-H_K(\ti(0))\right|+ & \left|\left\langle\der{H_K}{\ti}(\ti(0)),\ti(t)-\ti(0)\right\rangle\right|  \\
 \geq & \frac{\kappa}{2}|\ti(t)-\ti(0)|^2\ .
\end{split}
\end{align}
The conservation of energy
$$
H_m(\ti(t),\tth(t),\tx(t),\ty(t))=H_m(\ti(0),\tth(0),\tx(0),\ty(0))\ ,
$$
together with estimates (\ref{ch_nf}) on functions, implies the following chain of inequalities for the first term in (\ref{convexity})
\begin{align}\label{a}
\begin{split}
\left|H_K(\ti(t))-H_K(\ti(0))\right|&\leq 2\nub{g_m-\mathcal{G}}{1}+2\nub{f_m}{1}\\
&\leq 2\left[\left(\frac{p}{2}\frac{1-p^m}{1-p}+p^m\right) \nub{f_0}{3}+\nub{g_0-\mathcal{G}}{3}\ \right]\ .
\end{split}
\end{align}
\newline
On the other hand, we can split the second term in expression (\ref{convexity}) into its parallel and orthogonal component with respect to $\omega$,
\begin{align}\label{secondo_termine}
\begin{split}
\left|\left\langle \der{H_K}{I}(\ti(0)),\ti(t)-\ti(0)\right\rangle\right|
&\leq \left|\left\langle \omega,\ti(t)-\ti(0)\right\rangle\right|\\ & +\left|\left\langle \der{H_K}{I}(\ti(0))-\omega,\ti(t)-\ti(0)\right\rangle\right|\ ,\\
\end{split}
\end{align}
so that the goal now consists in bounding the two terms on the right hand side.
The former can be controlled thanks to the non-resonant nature of the exponentially small remainder $f_m$, as the following calculation shows:
\begin{align}
\begin{split}
\left|\left\langle \omega,\ti(t)-\ti(0)\right\rangle\right| = &  \nub{\left\langle\omega,\integ{0}{t}{\dot{\ti}(\tau)}{\tau}\right\rangle}{}\\
= & \nub{\integ{0}{t}{\left\langle\omega ,\der{H_m}{\tilde{\vartheta}}\right\rangle}{\tau}}{S_{0}(\tr)\times S_{0}(\tr)\times\mathbb{T}^2\times B_{0,0}(\tilde{\xi})\times B_{0,0}(\tilde{\xi})}\\
= & \nub{\integ{0}{t}{\left\langle\omega ,\der{f_m}{\tilde{\vartheta}}\right\rangle}{\tau}}{S_{0}(\tr)\times S_{0}(\tr)\times\mathbb{T}^2\times B_{0,0}(\tilde{\xi})\times B_{0,0}(\tilde{\xi})}\ .\\
\end{split}
\end{align}
Actually, in the third passage we have exploited the fact that, by taking the definition of $g_m$ into account, equality
$$
\left\langle \omega,\der{g_m}{\vartheta}\right\rangle=0
$$ 
holds.
Thus, we are left with the only contribution of the non-resonant part, which eventually yields
\begin{align}\label{b1}
\begin{split}
\left|\left\langle \omega,\ti(t)-\ti(0)\right\rangle\right| \leq &\ 	|\omega_1+\omega_2|\max_{j\in\{1,2\}}\nub{X_{f_m}^{I_j}}{1}|t| \\
\leq &\  r  |\omega_1+\omega_2| \ndb{X_{f_m}}{1}|t| \\
\leq &\ r |\omega_1+\omega_2| \eo q_1^m|t|\ \ ,\\
\end{split}
\end{align}
where, in the last inequality, we made use of estimates (\ref{ch_nf}).
\
\newline
The second term on the right-hand side of inequality (\ref{secondo_termine}) contains information about the radius of the ball of initial conditions in the normalized variables 
\begin{align}\label{b2}
\begin{split}
\left|\left\langle \der{H_K}{I}(\ti(0))-\omega,\ti(t)-\ti(0)\right\rangle\right| & =\left|\left\langle \der{H_K}{I}(\ti(0))-\omega,\ti(t)-\ti(0)\right\rangle\right|\\
& =\left|\left\langle D^2 H_K(\hat{I})(\ti(0)-I^0),\ti(t)-\ti(0)\right\rangle\right|\\
& \leq K \tilde{R}\nub{\ti(t)-\ti(0)}{S_0(\tr)\times S_0(\tr)}\ ,
\end{split}
\end{align}
where $\hat{I}\in S_{0}(\tilde{R})$ is the point associated to the remainder in Lagrange form.
\newline
By plugging (\ref{a}), (\ref{b1}) and (\ref{b2}) into (\ref{convexity}) we have 
\begin{align}\label{convexity_2}
\begin{split}
& 2\left[\left(\frac{p}{2}\frac{1-p^m}{1-p}+p^m\right) \nub{f_0}{3}+\nub{g_0-\mathcal{G}}{3}\ \right]\\
& + r |\omega_1+\omega_2| \eo q_1^m|t| 
+ K\tilde{R} \nub{\left(\ti(t)-\ti(0)\right)}{S_0(\tr)\times S_0(\tr)} \\
&\geq \frac{\kappa}{2}|\ti(t)-\ti(0)|^2_{S_0(\tr)\times S_0(\tr)}\ 
\end{split}
\end{align}
whose solution is 
\begin{align}\label{sol}
\begin{split}
0\leq \nub{\ti(t)-\ti(0)}{S_0(\tr)\times S_0(\tr)} \leq
\frac{K}{\kappa} \tr +\sqrt{\left(\frac{K}{\kappa} \tr\right)^2+a(t)}\ ,
\end{split}
\end{align}
where we denote with $a(t)$ the quantity
\begin{align}\label{definition_a}
\frac{2}{\kappa}\left[\left(p\frac{1-p^m}{1-p}+2p^m\right) \nub{f_0}{3}+2\nub{g_0-\mathcal{G}}{3}\ + r |\omega_1+\omega_2| \eo q_1^m|t|\right]\ .
\end{align}
In the final part of the proof, an explicit estimate on the escape time $\bar{t}$ will be found. Indeed, for every time $t<\bar{t}$ one has 
\begin{align}
\begin{split}\label{itilde2}
\nub{\tilde{I}(t)}{S_0(\tr)\times S_0(\tr)}\leq\nub{\tilde{I}(t)-\ti(0)}{S_0(\tr)\times S_0(\tr)}+\nub{\tilde{I}(0)}{S_0(\tr)\times S_0(\tr)}\leq\rho+r\ .
\end{split}
\end{align}
With the help of inequality (\ref{sol}) and by taking the definition of $\tr$ into account, the latter inequality can be rewritten as
\begin{equation}\label{itilde}
\left(\frac{K}{\kappa}+1\right) \tr +\sqrt{\left(\frac{K}{\kappa} \tr\right)^2+a(t)}\leq \rho+r\ .
\end{equation}
Extracting $t$ from the above formula one is left with
\begin{equation}\label{bound_t}
t<\frac{C_1(R)}{C_2}\ q_1^{-m}\ ,
\end{equation}
where the constants read
\begin{align}\label{Const}
\begin{split}
C_1(R) :=  \frac{\kappa}{2} & \left\{\left[\rho+r-\left(\frac{K}{\kappa}+1\right)\tr\right]^2 -\left(\frac{K}{\kappa}\tr\right)^2 \right\}  \\
  - &\left(p\frac{1-p^m}{1-p}+2p^m\right) \nub{f_0}{3}-2\nub{g_0-\mathcal{G}}{3}\\
C_2 := & r|\omega_1+\omega_2|\eo\ .
\end{split}
\end{align}
When coming back to the original, non-resonant variables, one must add to the variation calculated in (\ref{sol}) the size of the normal form transformation; this eventually yields
\begin{equation}
\nub{I(t)-I(0)}{S_0(R)}\leq \frac{K}{\kappa} \tr +\sqrt{\left(\frac{K}{\kappa} \tr\right)^2+a(t)}+\frac{T\eo}{2}\frac{1-q_1^m}{1-q_1}r\ ,
\end{equation}
so that the theorem is proved.
\end{proof}
\subsubsection{Confinement of the eccentricities}\label{conf_ecc}
In this section we prove the second part of theorem \ref{ac_stability} and insure that the diffusion of the cartesian variables is bounded thanks to the conservation of the angular momentum. In particular we have the following
\newtheorem{eccentricities}[ac_stability]{Theorem (Stability of the cartesian variables)}
\begin{eccentricities}\label{eccentricities}
Assume the hypotheses and the notations of Theorem \ref{stability}. Consider, in particular, the domain $\mathcal{D}_{\rho,4r,4s,\xi,4u}$ of analyticity for hamiltonian (\ref{ham_bello}). Choose a radius of initial conditions 
$$
\xi_0=\displaystyle\max_{j\in\{1,2\}}\{x_j(0)^2+y_j(0)^2\}
$$ 
for the cartesian variables and suppose that the size $\xi$ of the real domain, together with the analyticity width $u$, satisfies
\begin{align}\label{cond_u}
\begin{split}
\xi+\left(1-\frac{T\Eo}{2}\frac{1-q_2^m}{1-q_2}\right)u & >\xi_0\\
\xi+\left(1-\frac{T\Eo}{2}\frac{1-q_2^m}{1-q_2}\right)u & \geq \sqrt{\Lambda_1^0+\Lambda_2^0+2R_f(\bar{t})-\mathcal{N}^-(\xi_0)}\ ,
\end{split}
\end{align}
where $\mathcal{N}^-(\xi_0)$ denotes the minimal value that the angular momentum can take, 
\begin{equation}
\mathcal{N}^-(\xi_0):=(\Lambda_1^0-R)\sqrt{1-\bar{e}_{1}(0,\xi_0)^2}+(\Lambda_2^0-R)\sqrt{1-\bar{e}_{2}(0,\xi_0)^2}\ ,
\end{equation}
and $\bar{e}_{1}(0,\xi_0),\bar{e}_{2}(0,\xi_0)$ are the maximal initial eccentricities which are compatible with $\xi_0$, namely, by expression (\ref{poincare}),
\begin{equation}\label{max_ecc}
\bar{e}_j(0,\xi_0)=\sqrt{1-\left(1-\frac{\xi_0^2}{2(\Lambda^0_1-R)}\right)^2}\ ,\ \ j\in\{1,2\}.
\end{equation}
\newline
Then there exist two functions $(\overline{e}_1,\overline{e}_2)\in\ ]0,1[\times]0,1[$  depending on time and on $\xi_0$ such that, for all $t<\bar{t}$, one has
\begin{equation}
e_1(t)<\overline{e}_1(t,\xi_0)\ ,\ \ \ e_2(t)<\overline{e}_2(t,\xi_0)\ .
\end{equation} 
Moreover, $\overline{e}_1$ and $\overline{e}_2$ can be explicitly computed and read 
\begin{align}
\begin{split}
\overline{e}_1: (t,\xi_0)\longmapsto & \sqrt{1-\left(\frac{\mathcal{N}^{-}(\xi_0)-\Lambda_2^0-R_f(t)}{\Lambda_1^0+R_f(t)}\right)^2}\ ,\\
\overline{e}_2:  (t,\xi_0)\longmapsto & \sqrt{1-\left(\frac{\mathcal{N}^{-}(\xi_0)-\Lambda_1^0-R_f(t)}{\Lambda_2^0+R_f(t)}\right)^2}\ .
\end{split}
\end{align}
\end{eccentricities}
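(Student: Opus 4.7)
The plan is to combine the action stability from Theorem \ref{stability} with the conservation of the total angular momentum $\mathcal{N}$ defined in (\ref{angular_momentum}), which is a first integral of the three-body Hamiltonian. A bootstrap argument is needed because the action theorem requires the flow to remain inside the analyticity domain, while the confinement of the cartesian variables uses the action bound $R_f(t)$. Let $t^\ast$ denote the escape time of $(x_j,y_j)$ from the ball of radius $\xi+(1-\frac{T\Eo}{2}\frac{1-q_2^m}{1-q_2})u$; the first line of hypothesis (\ref{cond_u}) together with continuity gives $t^\ast>0$, and the goal is to prove $t^\ast\geq\bar{t}$.

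For any $t<\min\{t^\ast,\bar{t}\}$, Theorem \ref{stability} provides $|\Lambda_j(t)-\Lambda_j^0|\leq R_f(t)$. Inverting Poincar\'e's formula (\ref{poincare}) gives $x_j^2+y_j^2=2\Lambda_j(1-\sqrt{1-e_j^2})$, so the eccentricity is an increasing function of $x_j^2+y_j^2$ and a decreasing function of $\Lambda_j$. Hence the maximal initial eccentricity compatible with the data is $\bar{e}_j(0,\xi_0)$ in (\ref{max_ecc}), obtained by choosing the smallest admissible value $\Lambda_j=\Lambda_j^0-R$, and this yields $\mathcal{N}(0)\geq\mathcal{N}^-(\xi_0)$. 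Conservation of $\mathcal{N}$ then gives $\mathcal{N}(t)\geq\mathcal{N}^-(\xi_0)$ for all $t<t^\ast$. Isolating one eccentricity in the angular momentum relation and using $\sqrt{1-e_2^2}\leq 1$,
\begin{align*}
\Lambda_1(t)\sqrt{1-e_1(t)^2}=\mathcal{N}(t)-\Lambda_2(t)\sqrt{1-e_2(t)^2}\geq\mathcal{N}^-(\xi_0)-\Lambda_2^0-R_f(t),
\end{align*}
and dividing by $\Lambda_1(t)\leq\Lambda_1^0+R_f(t)$ yields the claimed bound $e_1(t)<\overline{e}_1(t,\xi_0)$, with a symmetric argument for $e_2$.

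To close the bootstrap, it remains to show that $(x_j(t),y_j(t))$ does not reach the boundary of the analyticity domain before time $\bar{t}$. Summing the Poincar\'e relation over $j$,
\begin{align*}
\sum_{j=1}^2\bigl(x_j(t)^2+y_j(t)^2\bigr)=2\bigl[\Lambda_1(t)+\Lambda_2(t)-\mathcal{N}(t)\bigr]\leq 2\bigl[\Lambda_1^0+\Lambda_2^0+2R_f(\bar{t})-\mathcal{N}^-(\xi_0)\bigr],
\end{align*}
and the second inequality in (\ref{cond_u}) is exactly what ensures that the right-hand side is dominated by $[\xi+(1-\frac{T\Eo}{2}\frac{1-q_2^m}{1-q_2})u]^2$, so neither $x_j^2+y_j^2$ can reach the confinement radius, proving $t^\ast\geq\bar{t}$. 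The main obstacle is precisely this interlocking between the two estimates: the action bound rests on the flow staying in the analyticity domain, while confinement of the cartesian variables depends on having $R_f(\bar{t})$ available. Setting up a continuity argument for $t^\ast$ and choosing the parameters so that the real initial condition lies strictly in the interior of the confinement set at $t=0$ is therefore the technical crux, and it is exactly this consistency that is encoded in the two conditions (\ref{cond_u}).
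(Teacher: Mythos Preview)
Your argument is correct and follows essentially the same route as the paper: define an escape time for the cartesian variables, use the Poincar\'e relation $\sum_j(x_j^2+y_j^2)=2(\Lambda_1+\Lambda_2-\mathcal{N})$ together with conservation of $\mathcal{N}\geq\mathcal{N}^-(\xi_0)$ and the action bound from Theorem~\ref{stability} to show the escape time exceeds $\bar{t}$, and then isolate each eccentricity from the angular-momentum identity in the worst case $e_{3-j}=0$. If anything, you are more explicit than the paper about the bootstrap/continuity mechanism linking the two estimates.
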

\begin{proof}
We define the time of escape of the cartesian variables from the domain of the normal form, namely
\begin{equation}
t_e:=\inf\left\{t\in\mathbb{R}:\left[\xi+\left(1-\frac{T\Eo}{2}\frac{1-q_2^m}{1-q_2}\right)u\right]^2 \leq \max_{j\in\{1,2\}}\{x_j(t)^2+y_j(t)^2\}\right\}
\end{equation}
and we have that for all $t<t_e$ the non-normalized cartesian variables are in the image of the normal form transformation described in lemma \ref{nf_lemma}.
\newline
Indeed, by the very definitions (\ref{poincare}) of $x_j$ and $y_j$ and by expression (\ref{angular_momentum}), we can write the above inequality in the form
\begin{equation}\label{condizione}
\left[\xi+\left(1-\frac{T\Eo}{2}\frac{1-q_2^m}{1-q_2}\right)u\right]^2 > 2(\Lambda_1(t)+\Lambda_2(t)-\mathcal{N})\ \ \ \ \forall t<t_e\ 
\end{equation}
and one immediately sees that hypothesis (\ref{cond_u}), together with theorem \ref{stability}, implies that $\bar{t}<t_e$. As a matter of fact, we now have that for any initial condition $(x_j(0),y_j(0)),j\in\{1,2\},$ in the original non-normalized cartesian variables such that 
$$
x_j^2(0)+y_j^2(0)<\xi_0\ ,
$$
with $\xi_0$ satisfying (\ref{cond_u}), one is insured that the system does not escape from the domain of analyticity for any time $t$ inferior to the time of confinement in the action variables. 
\newline
Moreover, since $\mathcal{N}$ is an integral of motion, we have that for all times $t\in\mathbb{R}$
\begin{equation}
\mathcal{N}\geq \mathcal{N}^-(\xi_0)\ ;
\end{equation}
solving equation (\ref{angular_momentum}) with respect to $e_1(t)$ yields
\begin{equation}
e_1(t)=\sqrt{1-\left(\frac{\mathcal{N}-\Lambda_2(t)\sqrt{1-e_2(t)^2}}{\Lambda_1(t)}\right)^2}
\end{equation}
so that the worst case scenario corresponds to $\mathcal{N}=\mathcal{N}^-(\xi_0)$ and $e_2(t)=0$. Thus, we can say that for all $t<\bar{t}$
\begin{equation}
e_1(t)\leq \overline{e}_1 (t,\xi_0):=\sqrt{1-\left(\frac{\mathcal{N}^{-}(\xi_0)-\Lambda_2^0-R_f(t)}{\Lambda_1^0+R_f(t)}\right)^2}\ ,
\end{equation}
where, once again, we have used Theorem $\ref{stability}$ to give an upper bound to the actions.
With a similar calculation one gets the expression for $\overline{e}_2 (t,\xi_0)$.
\end{proof}
\subsubsection{Proof of the main stability theorem}
Theorems \ref{stability} and \ref{eccentricities} together imply theorem \ref{ac_stability}. Such result is strictly local since it has been constructed in the neighborhood of a periodic torus. In order to obtain a global result (which is not our purpose here), one could make use of Dirichlet theorem so to cover the whole phase space with periodic orbits of the unperturbed system, as in \cite{Lochak_Neishtadt_Niederman_1994}. 

\section{The restricted, circular, planar three-body problem}\label{restricted_tbp}
\subsection{Motivation}\label{motivation}
Theorem \ref{ac_stability} insures Nekhoroshev-like stability for the plane, planetary three-body problem in the neighborhood of a periodic orbit of the unperturbed system. Clearly, the method we used to prove it can be applied to any quasi-integrable system, provided that one explicitly knows the analyticity widths and the initial bounds on its hamiltonian vector fields. In the previous section, we just had information on the size of the perturbation in its domain of analyticity, so that we were obliged to make use of the Cauchy inequalities in order to get estimates (\ref{useful}). These inequalities, in turn, are derived from the well-known Cauchy representation formula (see e.g. \cite{Schiedemann_2005}) with the help of generic bounds that may not be sharp at all in many concrete applications. Therefore, a direct computation of the derivatives, when possible, may lead to improved initial estimates. This turns out to be very important in the case we are considering since any initial gain in the estimates for functions and vector fields grows exponentially in the number of iterations of lemma \ref{iterative}, as theorem \ref{stability} shows. \newline 
Moreover, at least in principle, theorem \ref{stability} may be limited in its physical applications by the complex singularities of the considered hamiltonian. Indeed, as we shall see when considering numerical computations in paragraph \ref{num_comp}, the value of the analyticity width $r$ in the action variables which yields the longest time of stability increases with the size $\varepsilon$ of the perturbation. Thus, at least in principle, singularities may be encountered when considering a domain which is too large in the action variables. Knowing exactly where these singularities are in complex action-angle coordinates turns out to be a very diffucult matter when considering problems in celestial mechanics. In \cite{Castan_2017}, for example, one is given sufficient conditions so to avoid them. \newline
In order to see what happens when such difficulties can be overcome, it is interesting to apply the results of section \ref{tbpp} to a system whose hamiltonian vector fields can be directly estimated without making use of the Cauchy inequalities and whose hamiltonian perturbation has no complex singularities. In this spirit, we chose to investigate the Nekhoroshev-like stability in the neighborhood of a periodic torus for the restricted, circular, planar three-body problem as modeled in \cite{Celletti_Chierchia_2007} and \cite{Celletti_Ferrara_1996}.

\subsection{Hamiltonian framework}\label{ham_fr_r}
Here, we briefly recall the hamiltonian setup stated in \cite{Celletti_Chierchia_2007} and we give some suitable definitions. Consider, once again, three coplanar bodies mutually interacting through the sole gravitational force and label them with an index $j\in\{0,1,2\}$. In this case we suppose that the mass $m_0$ is much greater than $m_1$ and that $m_2=0$. When considering heliocentric coordinates, we are left with an elliptic orbit of frequency $\omega_g$ and semi-major axis $a_1$ for body $1$ around body $0$ and with body $2$ undergoing interactions with the primaries. The circular approximation consists in assuming a null eccentricity for the trajectory of body $1$ in the configuration space. In this framework, suitable action-angle coordinates for body $2$, expressed as functions of its time-dependent orbital elements, are
\begin{equation}\label{delaunay_restricted}
\begin{cases}
L\ :=\ & \mu\sqrt{G_N m_0 a}\\
G\ :=\ & L\sqrt{1-e^2}\\
l\ :=\ & \lambda\\
g\ :=\ & \gamma-\tau\ \\
\end{cases}\ ,
\end{equation}
where we have denoted $\mu:=(G_N m_0)^{-2/3}$ and where $\lambda,\gamma$ respectively stand for the mean longitude and the argument of periapsis for body $2$ and $\tau$ is  the mean longitude of body $1$.
\begin{figure}[h]
\centering
\includegraphics[scale=0.3]{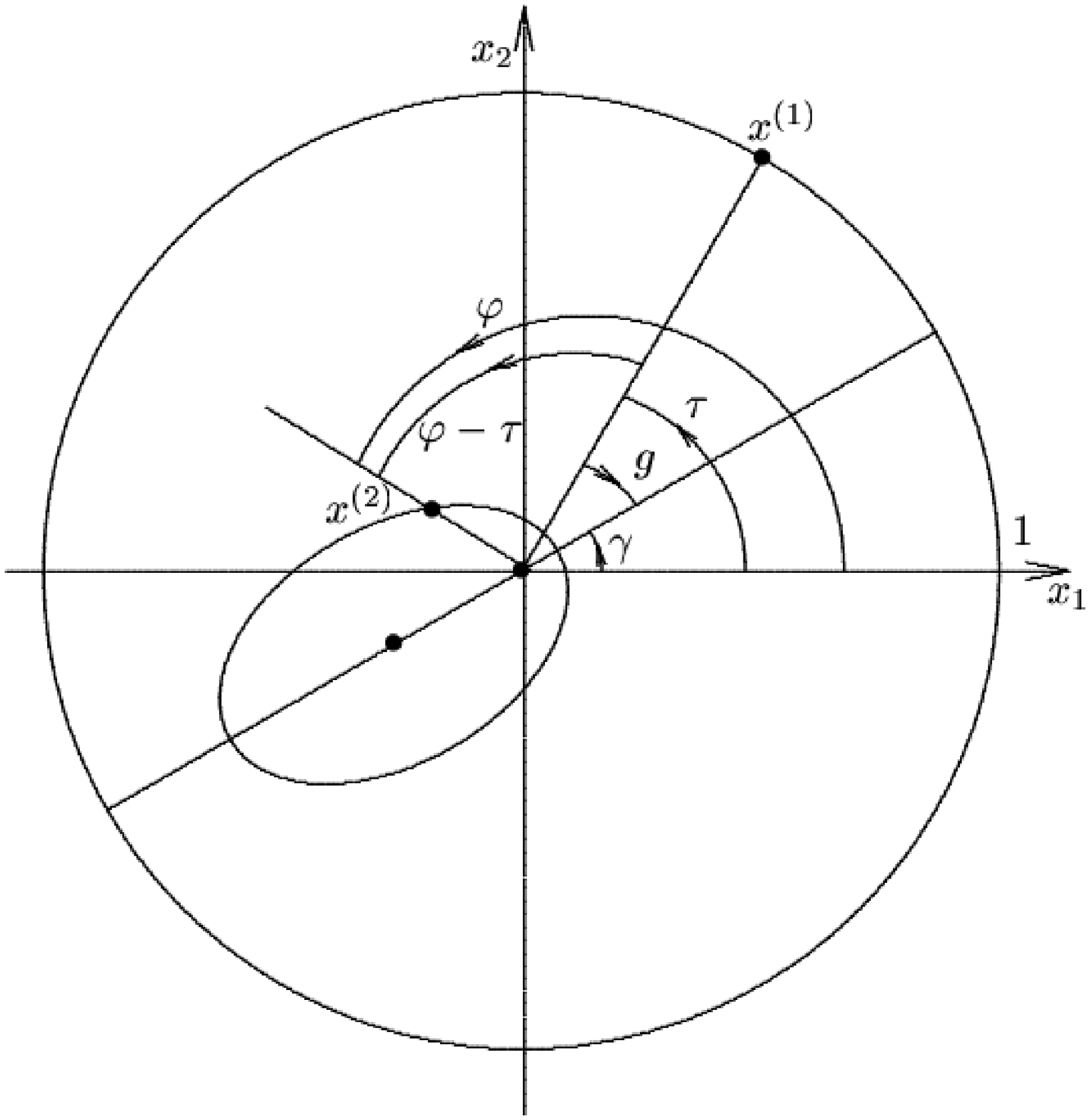}\label{cc}
\caption{Orbital elements following the construction in \citep{Celletti_Chierchia_2007}.}
\end{figure}
Following the construction in \cite{Celletti_Chierchia_2007}, the motion of body $2$ is governed by the following hamiltonian:
\begin{equation}
H(L,G,l,g):=H_0(L,G)+\varepsilon H_1(L,G,l,g)\ ,
\end{equation}
where
\begin{equation}\label{unperturbed_restricted}
H_0(L,G):= -\frac{1}{2L^2}-\omega_g G\ ,
\end{equation}
and $H_1$ is a trigonometric polynomial which is obtained by retaining only the most relevant harmonics from the Fourier expansion of the complete perturbation. A rigorous criterion insuring that the truncated model stays close to the complete one is implemented in \cite{Celletti_Chierchia_2007}. \newline
Since we are interested in the behaviour of this system in the neighbourhood of a $p:q$ resonance corresponding to a $T$-periodic torus, we can consider the same resonant decomposition that held for the planetary three-body problem in section \ref{tbpp}. For the sake of simplicity, we shall use the same symbols to denote quantities that play the same roles in the two cases. Thus, we are allowed to write
\begin{equation}\label{restricted_ham}
H(L,G,l,g):=h(L,G)+g_0(L,G,l,g)+f_0(L,G,l,g)\ ,
\end{equation}
where $h$ generates the integrable linear flow of frequencies $(\omega_l,\omega_g)$, and $g_0,f_0$ are the resonant and non resonant perturbations. In this case, $g_0$ and $f_0$ are two trigonometric polynomials. Moreover, as we did in the prequel, we use the symbol $\mathcal{G}$ to denote the remainder of order 2 in the expansion of $H_0$ and $(L^0,G^0)$ to denote the action variables corresponding to the exact resonance for the integrable hamiltonian.
\newline
After these observations, we now consider the domain
\begin{align}\label{dom_res}
\begin{split}
\dom{\rho_L,\rho_G,r_L,r_G,s_l,s_g}:=\{& (L,G,l,g) \in\mathbb{C}^4:\\
& \exists \ L^*\in S_{L^0}(\rho_L) \text{ such that } \left|L-L^*\right|<r_L\ ,\\
& \exists \ G^*\in S_{G^0}(\rho_G) \text{ such that } \left|G-G^*\right|<r_G\ ,\\
& \Re e(l,g)\in\mathbb{T}^2\ ,\ \  
\left|\Im m(l)\right|<s_l\ ,\ \ \left|\Im m(g)\right|<s_g
\}
\end{split}
\end{align}
with the same shorthand notations we defined in (\ref{shorthands}). Remark that the values for the analiticity widths can be arbitrary in this case since there are no complex singularities. Then, we assume that the truncated model described by hamiltonian (\ref{restricted_ham}) satisfies the same assumptions on the magnitude of the discarded harmonics as in \cite{Celletti_Chierchia_2007}. Such condition was always checked when performing the computations of section \ref{num_comp}. In this spirit, we introduce the following definition:
\newtheorem{norms2}[sets]{Definition}
\begin{norms2}
For $(j,\sigma_j)\in \{(L,r_L),(G,r_G),(l,s_l),(g,s_g)\}$, for any open set $\mathcal{E}\subset\mathbb{C}^4$ and for any continuous, bounded vector field $v: \mathcal{E}\longrightarrow \mathbb{C}^4$, we define the following norm for each component $v^j$:
\begin{align}\label{norme_anisotrope}
\begin{split}
\ndbs{v}{\mathcal{E}}{j}:= \frac{\nub{v^j}{\mathcal{E}}}{\sigma_j}\ .\\
\end{split}
\end{align}
\end{norms2}
\
\newline
As we did in section \ref{tbpp}, we also assume the following bounds on the anisotropic norms
\begin{align}\label{bornes}
\begin{split}
\ndbs{\ch{\mathcal{G}}}{3}{L}\leq \delta,\ \ndbs{\ch{f_0}}{3}{j}\leq \eta_0^j,\ \ndbs{\ch{g_0-\mathcal{G}}}{3}{j}\leq \gamma_0^j,\ \ \ \ j\in\{L,G,l,g\}
\ .
\end{split}
\end{align}
Notice that $\mathcal{G}$ only depends on the first action $L$ as $H_0(L,G)$ is linear with respect to $G$. 
\newline
Since perturbation $H_1$ is an explicit finite sum of Fourier harmonics, quantities (\ref{bornes}) can be estimated without making use of the Cauchy inequalities. 
As in the planetary case, the non-null eigenvalue of the hessian matrix $D^2 H_0(I)$, denoted $\varrho(L)$, satisfies 
$$
\kappa \leq |\varrho(L)| \leq K
$$ 
for all values of $L$ in the domain $\dom{\rho_L,\rho_G,r_L,r_G,s_l,s_g}$, where $K$ and $\kappa$ are two positive constants. As in the planetary case, both quantities can be explicitly computed. Finally, we introduce five real functions that play the same role that (\ref{def_upsilon}), (\ref{def_Upsilon}) and (\ref{def_zeta0}) played in the planetary case, namely
\begin{align}
\begin{split}
\upsilon_0^L: x\longmapsto & \frac{\left(Tx\right)^2\eol}{2}\chi_0+\frac{Tx}{2}\chi_0+\left(1+\frac{\goL}{\eoL}\right)x\Theta_0\\
+ &\frac{s_2}{s_1}\frac{r_2}{r_1}\frac{Tx^2}{2\eoL}\left\{T\eog\eoG\chi_0+\left[\eog(\eoG+\goG)+\eoG(\eog+\gog)\right]\Theta_0\right\}\\
+& \frac{Tx^2}{2}\left[\eol\left(1+\frac{\goL}{\eoL}\right)+\eol+\gol+\dd\right]\Theta_0\ ,\\
\end{split}
\end{align}
\begin{align}
\begin{split}
\upsilon_0^G: x\longmapsto & \frac{\left(Tx\right)^2\eog}{2}\chi_0+\frac{Tx}{2}\chi_0+\left(1+\frac{\goG}{\eoG}\right)x\Theta_0\\
+ &\frac{s_1}{s_2}\frac{r_1}{r_2}\frac{Tx^2}{2\eoG}\left\{T\eol\eoL\chi_0+\left[\eol(\eoL+\goL)+\eoL(\eol+\gol+\dd)\right]\Theta_0\right\}\\
+& \frac{Tx^2}{2}\left[\eog\left(1+\frac{\goG}{\eoG}\right)+\eog+\gog\right]\Theta_0\ ,\\
\end{split}
\end{align}
\begin{align}
\begin{split}
\upsilon_0^l: x\longmapsto & \left(Tx\right)^2\frac{\eoL}{2}\chi_0+\frac{Tx}{2}\chi_0+\left(1+\frac{\gol}{\eol}+\frac{\dd}{\eol}\right)x\Theta_0\\
+ &\frac{s_2}{s_1}\frac{r_2}{r_1}\frac{Tx^2}{2\eol}\left\{T\eog\eoG\chi_0+\left[\eog(\eoG+\goG)+\eoG(\eog+\gog)\right]\Theta_0\right\}\\
+& \frac{Tx^2}{2}\left[\eoL\left(1+\frac{\gol}{\eol}+\frac{\dd}{\eol}\right)+\eoL+\goL\right]\Theta_0\ ,\\
\end{split}
\end{align}
\begin{align}
\begin{split}
\upsilon_0^g: x\longmapsto & \left(Tx\right)^2\frac{\eoG}{2}\chi_0+\frac{Tx}{2}\chi+\left(1+\frac{\gog}{\eog}\right)x\Theta_0\\
+ &\frac{s_1}{s_2}\frac{r_1}{r_2}\frac{Tx^2}{2\eog}\left\{T\eol\eoL\chi_0+\left[\eol(\eoL+\goL)+\eoL(\eol+\gol+\dd)\right]\Theta_0\right\}\\
+& \frac{Tx^2}{2}\left[\eoG\left(1+\frac{\gog}{\eog}\right)+\eoG+\goG\right]\Theta_0\ ,\\
\\
\zeta_0: x\longmapsto & \frac{Tx}{2}\chi_0\ ,\\
\end{split}
\end{align}
where we have set 
\begin{equation}
\chi_0:= \sup\{\eoL+\goL,\ \eoG+\goG,\ \eol+\gol+\dd,\ \eog+\gog\}\ ,
\end{equation}
\begin{equation}
\Theta_0:= \frac{T}{2}\sup\{\eoL,\eoG,\eol,\eog\}\ .
\end{equation}
\subsection{Stability in the neighbourhood of a periodic torus}
Taking the definitions of the previous paragraph into account, we are now ready to state a stability result for the restricted problem. Since hamiltonian (\ref{unperturbed_restricted}) is strictly convex only in the $L$ coordinate, the method we used when proving theorem \ref{stability} can only be used to confine this variable as the following theorem shows. The $G$ variable could be bounded by making use of some arguments exploiting quasi-convexity (see e.g. \cite{Lochak_1992}). However, since we are in the particular case of a two degrees of freedom system, we chose to confine the $G$ variable by making use of the conservation of energy since such approach involves simpler calculations. 
\newtheorem{stability2}[ac_stability]{Theorem (Stability for the whole system)}
\begin{stability2}\label{stability2}
Assume the constructions above for hamiltonian (\ref{restricted_ham}) in $\dom{\rho_L,\rho_G,r_L,r_G,s_l,s_g}$. Suppose that there exist $m\in\mathbb{N}$ and five numbers $p,q_j\in\left]0,\displaystyle\frac{2}{3}\right[$, where $j\in\{L,G,l,g\}$ is an alphabetical index, satisfying
\begin{align}\label{hyp_stab_3}
2\upsilon_0^j(m)<q_j\ ,\ \ \ 
2\zeta_0(m)< p\ .
\end{align}
Fix $\varepsilon$ sufficiently small and suppose that the analiticity radii $r_G,\rho_G$ are sufficiently big so that one can pick two positive real numbers $L_{init},G_{init}$ satisfying
\begin{align}\label{size_initial_rad_2}
\begin{split}
C_3(L_{init})&>0\ ,\\
\end{split}
\end{align}
and
\begin{equation}\label{GGG}
|G^0+G_{init}|+\frac{1}{\omega_G}\left(W(L_{init})+2\varepsilon|H_1|_1\right)\leq \rho_G+r_G-\frac{T\eo^G}{2}\frac{1-q_G^m}{1-q_G}r_G\ ,
\end{equation}
where 
\begin{align}
\begin{split}
C_3(L_{init}) := & \frac{\kappa}{2}  \left\{\left[\rho_L+r_L-\left(\frac{K}{\kappa}+1\right)L_{init}\right]^2 -\left(\frac{K}{\kappa}L_{init}\right)^2 \right\} \\
    &-\left(p\frac{1-p^m}{1-p}+2p^m\right) \nub{f_0}{3}-2\nub{g_0-\mathcal{G}}{3}\ ,\\
    \ \\
W(L_{init}):= & \frac{\left(L_{init}+V(\rho_L,r_L,\eo^L)\right)\left(L_{init}+2L^0+V(\rho_L,r_L,\eo^L)\right)}{2\left(L^0-V(\rho_L,r_L,\eo^L)\right)^2(L^0-L_{init})^2}
\end{split}
\end{align}
and we have denoted
\begin{equation}
V(\rho_L,r_L,\eo^L)=\rho_L+r_L-\frac{T\eo^L}{2}\frac{1-q_L^m}{1-q_L}r_L\ .
\end{equation}
\ 
\newline
Then there exist a positive constant $C_4$ and three functions $L_f,A_{\pm}:\mathbb{R}\longrightarrow\mathbb{R}$ such that, for any initial condition
\begin{equation}
(L(0),G(0))\in S_{L^0}\left(L_{init.}\right)\times S_{G^0}\left(G_{init.}\right)
\end{equation}
and for any time 
\begin{equation}
|t| < \bar{t}:= \frac{C_3(L_{init})}{C_4}q_L^{-m}\ ,
\end{equation}
the flow of $H$ stays inside $\mathcal{D}_{1-\frac{T\eo^j}{2}\frac{1-q_j^m}{1-q_j}}\ ,j\in\{L,G,l,g\},$ and one has
\begin{align}\label{Lf}
\begin{split}
\nub{L(t)-L(0)}{S_{L^0}(L_{init})} \leq L_f(t)\ , \\ 
\end{split}
\end{align}
whereas the eccentricity is bounded by
\begin{equation}\label{be}
\sqrt{1-A_+(t)}\leq e(t)\leq \sqrt{1-A_-(t)}\ .
\end{equation}
Moreover, explicit expressions for such constant and functions can be found and read:
\begin{align}\label{C3}
\begin{split}
C_4 := &  \left|\omega_l \eo^L r_L + \left(\frac{q_G}{q_L}\right)^m \omega_g \eo^G r_G\right|\ ,\\
\ \\
L_f : &  \ t\longmapsto  \frac{K}{\kappa} \tilde{L} +\sqrt{\left(\frac{K}{\kappa} \tilde{L}\right)^2+b(t)}+\frac{T\eo^L}{2}\frac{1-q_L^m}{1-q_L}r_L\ ,\\
\ \\
A_\pm(t):= & \frac{1}{a(t)}\left[a(0)y(0)+\frac{B^2(t)}{G_N m_0(\mu\omega_G)^2}\right]\pm\frac{2B(t)}{a(t)\mu\omega_G}\sqrt{\frac{a(0)y(0)}{G_N m_0}}\ ,
\end{split}
\end{align}
where we have denoted  
\begin{align}
\begin{split}
\tilde{L} := &  L_{init.}+\frac{T\eo^L}{2}\frac{1-q_L^m}{1-q_L}r_L
\ \\
b(t):= & \frac{2}{\kappa}\left[\left(p\frac{1-p^m}{1-p}+2p^m\right) \nub{f_0}{3}+2\nub{g_0-\mathcal{G}}{3}\ + C_4q_L^m|t|\right]\\
y(0):= & \sqrt{1-e^2(0)}\\
B(t):= & \left|\frac{1}{2L^2(t)}-\frac{1}{2L^2(0)}\right|
+\varepsilon\left| H_1 \circ \Lambda_H^t-H_1 \circ \Lambda_H^0\right|
\end{split}\ .
\end{align}
\end{stability2}
\
\newline
\begin{proof}
The stability of the $L$ coordinate is demonstrated by putting the non-resonant perturbation into normal form and by applying exactly the same geometrical argument of theorem \ref{stability}. Clearly, two lemmas corresponding to lemmas \ref{nf_lemma} and \ref{iterative} in section \ref{stab} hold also in this case: their statements and proofs can be found in appendix \ref{lem_res}.
\newline
As for the bound on the $G$ variable, we exploit the conservation of energy for hamiltonian (\ref{restricted_ham}),
\begin{equation}
H(L(t),G(t),l(t),g(t))=H(L(0),G(0),l(0),g(0))\ ,
\end{equation}
which yields the following bound:
\begin{equation}
|G(t)-G(0)|\leq \frac{1}{\omega_G}\left(\left|\frac{1}{2L^2(t)}-\frac{1}{2L^2(0)}\right|+\varepsilon|H_1 \circ \Lambda_H^t-H_1 \circ \Lambda_H^0|\right)\ ,
\end{equation}
and one sees that, thanks to hypothesis (\ref{GGG}) and with the help of standard bounds, such inequality insures that the variable $G$ stays in the considered domain for any time $t$ inferior to the time of stability of the $L$ variable.
By taking the second expression in (\ref{delaunay_restricted}) into account and solving with respect to $e$ one gets inequality (\ref{be}). Moreover, by considering the expression for (\ref{Lf}), one obtains a suitable supremum for the eccentricity. 
\end{proof}
\section{Examples and concrete computations}\label{num_comp}
In the last part of this work, we have performed computations in order to investigate the mechanisms leading to Nekhoroshev stability for some astronomical systems close to resonances. This also allows for a disentanglement of the limits that such techniques can encounter and suggest solutions on how to overcome them. In particular, as we shall show in the sequel, various obstacles may arise when increasing the size $\varepsilon$ of the perturbation. However, there seems to be good hopes of reaching physical values for $\varepsilon$, at least in the truncated, restricted, circular, planar three-body problem. Moreover, good thresholds on the size of the perturbation were reached both in the KAM framework (see \cite{Celletti_Chierchia_2007}) and in the Nekhoroshev one (see \cite{Celletti_Ferrara_1996}) when considering the latter model in other regions of the phase space. The computations that we present hereafter were carried out with the help of codes written in Mathematica language.  
\subsection{The 5:2 resonance for the planetary problem}
It is known since a long time (see e.g. \cite{Goldreich_1965}) that various commensurability relations hold for the frequencies of celestial bodies in the Solar System. For example, Jupiter and Saturn lie very close to the $5:2$ mean-motion resonance (see \cite{Michtchenko_Ferraz-Mello_2001} and references therein for an astronomical point of view on this phenomenon) and the ratio of their masses is close to $10^{-3}$.
Moreover, the relative inclinations of their orbital planes are small. 
\newline
In this spirit, we choose to study the plane, planetary three-body problem described in section \ref{tbpp} with explicit values corresponding to a Sun-Jupiter-Saturn model (with smaller masses) in 5:2 resonance. The initial data for the eccentricities and for the resonant action $\Lambda_1^0$ are set to be those of J2000 (see \url{https://nssdc.gsfc.nasa.gov/planetary/factsheet/}), whereas $\Lambda_2^0$ is determined by the resonant relation between the two mean motion frequencies and by Kepler's third law. Then, for different initial conditions in the actions in a neighborhood of $(\Lambda_1^0,\Lambda_2^0)$ and for different values of $\varepsilon$, we compute by trial and error the analyticity widths and the number $m$ of iterations of lemma \ref{iterative} which yield the longest times of stability $\bar{t}$. The magnitude of the perturbing function on the chosen domain of analyticity was estimated with the help of majorant series thanks to a code provided by Dr. Thibaut Castan (see \cite{Castan_2017} for more details). The best results are obtained for $R=\rho=0$, which amounts to setting the initial conditions in the action variables exactly at the resonance $(\Lambda_1(0)=\Lambda_1^0,\Lambda_2(0)=\Lambda_2^0)$. For other initial conditions in the actions variables not exactly at the resonance, one obtains times of stability which are comparable with the age of the Solar System for similar magnitudes of the perturbation, provided that the radius of initial conditions satisfies $R\lesssim 8\times 10^{-7}\times \max\{\Lambda_1^0,\Lambda_2^0\}$ and that $\rho\lesssim 1\times10^{-6}\times \max\{\Lambda_1^0,\Lambda_2^0\}$. Such results are contained in the tables below.
\begin{center}\label{Table1}
  \begin{tabular}{ | c | c | c | c | c | c | }
    \hline
    $\log(\varepsilon)$ & $m$ & $\bar{t}\ (y)$ & $R_f(\bar{t})/\max\{\Lambda_1^0,\Lambda_2^0\}$ & $\bar{e}_1$ & $\bar{e}_2$\\ \hline \hline
    $-12.25$ & $61$ & $5.71\times 10^{39}$ & $7.07\times 10^{-7}$ & $0.0595$ & $0.0932$\\ \hline
    $-12.00$ & $45$ & $1.17\times 10^{29}$ & $9.65\times 10^{-7}$ & $0.0595$ & $0.0932$ \\ \hline
    $-11.75$ & $34$ & $1.25\times 10^{21}$ & $1.30\times 10^{-6}$ & $0.0595$ & $0.0932$ \\ \hline
    $-11.50$ & $25$ & $1.48\times 10^{15}$ & $1.80\times 10^{-6}$ & $0.0595$ & $0.0933$ \\ \hline
    $-11.25$ & $18$ & $5.75\times 10^{10}$  & $2.51\times 10^{-6}$ & $0.0595$ & $0.0933$ \\ \hline
    $-11.00$ & $14$ & $3.08\times 10^{7}$ & $3.54\times 10^{-6}$ & $0.0596$ & $0.0934$ \\ \hline
    $-10.75$ & $10$  & $1.22\times 10^{5}$ & $5.14\times 10^{-6}$  & $0.0596$ & $0.0934$ \\ \hline
  \end{tabular}
 \captionof{table}{{\footnotesize From left to right: magnitude of the perturbation, number of iterative steps, time of stability, radius of confinement in the actions, maximal values for the eccentricities. Initial conditions in the actions are supposed to be those corresponding exactly to the 5:2 resonance, whereas the initial values for the eccentricities are set to be those for Jupiter and Saturn at J2000.}}
\end{center}

\begin{center}
  \begin{tabular}{ | c | c | c | c | c | }
    \hline
   $\log(\varepsilon)$ & $r/\max\{\Lambda_1^0,\Lambda_2^0\}$ & $s$ & $|1-\beta|$ & $\xi$ \\ \hline \hline
    $-12.25$ & $3.54\times 10^{-7}$ & $3.97\times 10^{-2}$ & $\sim 6\times 10^{-4}$ & $4.36\times 10^{15}$ \\ \hline
    $-12.00$ & $4.83\times 10^{-7}$ & $3.95\times 10^{-2}$ & $\sim 4\times 10^{-4}$ & $5.80\times 10^{15}$  \\ \hline
    $-11.75$ & $6.51\times 10^{-7}$ & $3.91 \times 10^{-2}$ & $\sim 2\times 10^{-2}$ & $7.73\times 10^{15}$ \\ \hline
    $-11.50$ & $9.04\times 10^{-7}$ & $3.89 \times 10^{-2}$ & $\sim 5\times 10^{-4}$ & $1.03\times 10^{16}$\\ \hline
    $-11.25$ & $1.25\times 10^{-6}$ & $3.85 \times 10^{-2}$ & $\sim 3 \times 10^{-5}$ & $1.37\times 10^{16}$ \\ \hline
    $-11.00$ & $1.76\times 10^{-6}$ & $3.82 \times 10^{-2}$ & $\sim 7 \times 10^{-4}$ & $1.83\times 10^{16}$ \\ \hline
    $-10.75$ & $2.57\times 10^{-6}$ & $3.76\times 10^{-2}$ & $\sim 8\times 10^{-5}$ & $2.43\times 10^{16}$\\ \hline
  \end{tabular}
\captionof{table}{{\footnotesize From left to right: magnitude of the perturbation, analyticity widths for the action-angle variables and for the cartesian coordinates. Initial conditions are the same of Table 1.}}
\end{center}
\begin{center}
  \begin{tabular}{ | c | c | c | c | c |  }
    \hline
   $\log(\varepsilon)$ & $m$ & $\rho/\max\{\Lambda_1^0,\Lambda_2^0\}$ & $R/\max\{\Lambda_1^0,\Lambda_2^0\}$& $\bar{t}\ (y)$\\ \hline \hline
    $-14.00$ & $22$ & $1.67\times 10^{-6}$ & $ 1.38\times 10^{-6}$ & $1.99\times 10^{10}$ \\ \hline
    $-12.25$ & $19$ & $1.22\times 10^{-6}$ & $ 1.13\times 10^{-6}$ & $3.04\times 10^{9}$  \\ \hline
    $-11.5$ & $18$ & $1.03 \times 10^{-6}$ & $8.09\times 10^{-7}$ & $1.04\times 10^{9}$ \\ \hline
  \end{tabular}
\captionof{table}{{\footnotesize From left to right: magnitude of the perturbation, number of iterative steps, real radius of the polydisk in the actions, radius of initial conditions in the actions, time of stability. Initial conditions in the actions are contained in an interval of radius $R$, whereas the initial values for the eccentricities are set to be those for Jupiter and Saturn at J2000. }}
\end{center}

\begin{figure}
\centering
\begin{subfigure}{.45\textwidth}
  \centering
  \includegraphics[width=1\linewidth]{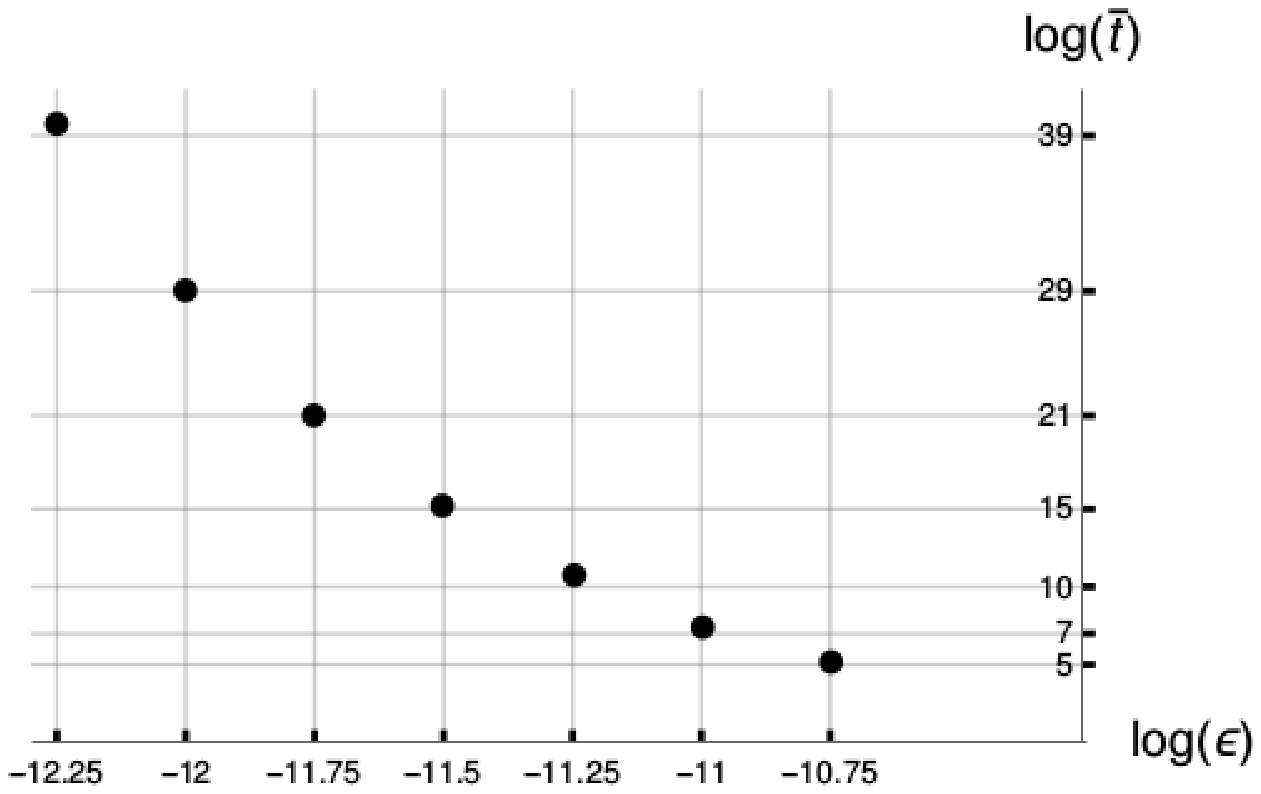}
  \label{fig:sub1}
\end{subfigure}%
\hspace{15pt}
\begin{subfigure}{.45\textwidth}
  \centering
  \includegraphics[width=1\linewidth]{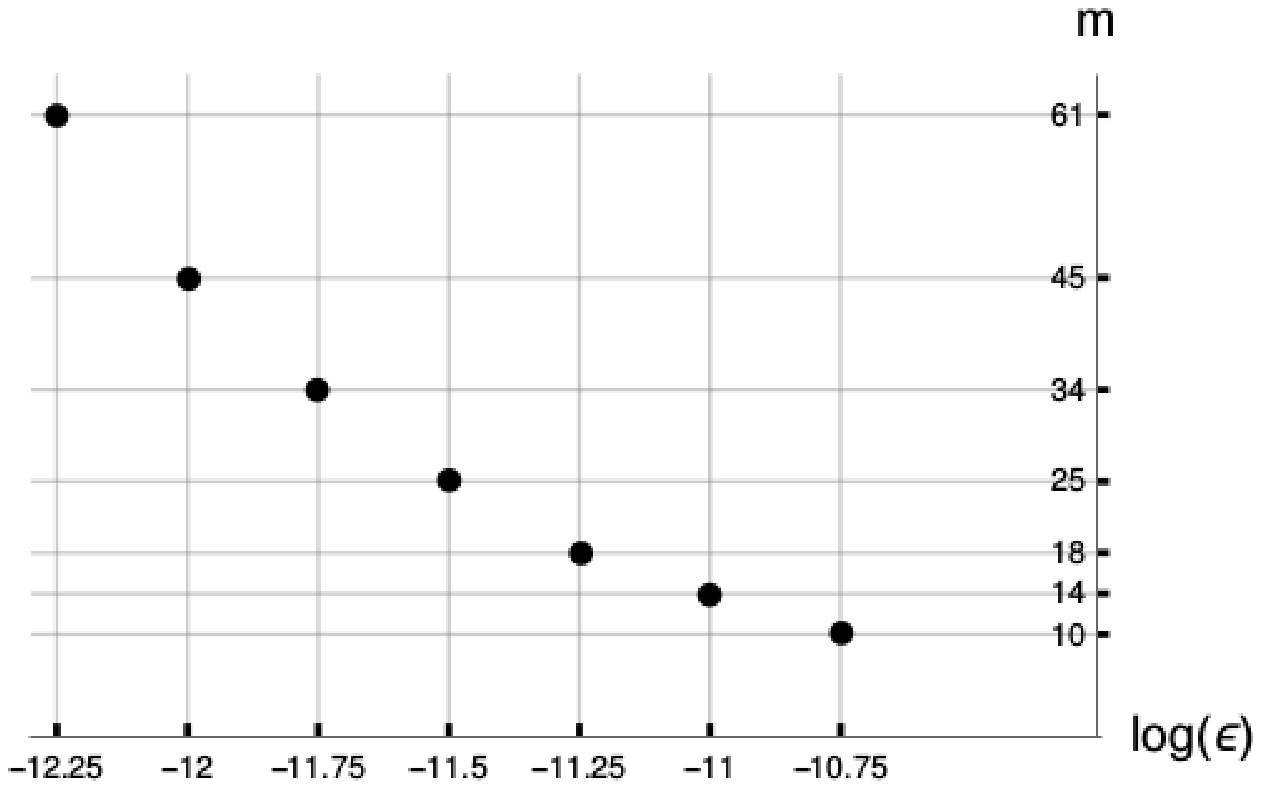}
  \label{fig:sub2}
\end{subfigure}
\begin{subfigure}{.45\textwidth}
  \centering
  \includegraphics[width=1\linewidth]{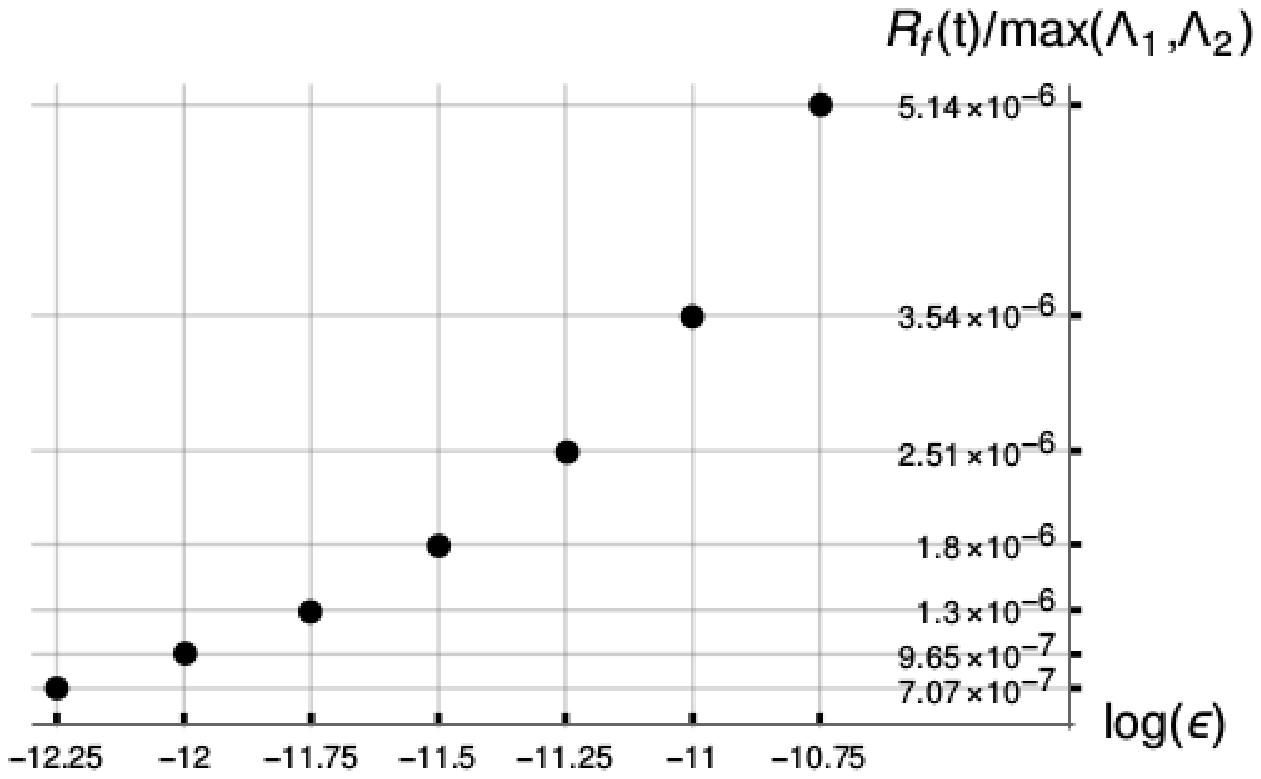}
  \label{fig:sub3}
\end{subfigure}%
\hspace{15pt}
\begin{subfigure}{.45\textwidth}
  \centering
  \includegraphics[width=1\linewidth]{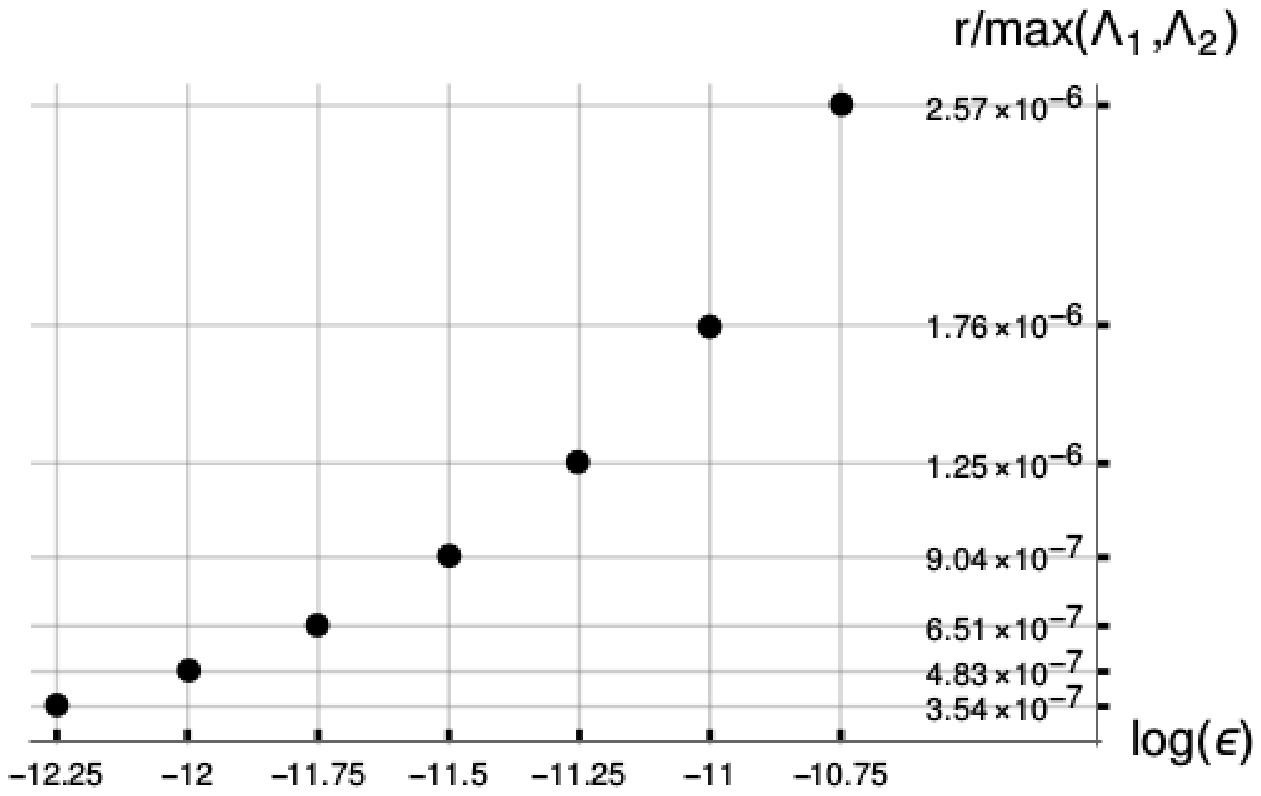}
  \label{fig:sub4}
\end{subfigure}%
\caption{{\footnotesize In clockwise sense starting from upper left: superlinear dependence of the maximal time of stability on the size of the perturbation; decrease of the best number of perturbative steps $m$; increase of the value of the analiticity width $r$ yielding the longest time of stability; increase of the radius of confinement of the action variables. Initial conditions are the same of Table 1. }}
\label{fig:test}
\end{figure}
\
\newline
Indeed, we notice that the best number of iterations $m$ decreases quite rapidly when $\varepsilon$ undergoes even small variations. This prevents one from obtaining a time of stability comparable with the timescale of the problem (which is the estimated age of the Solar System, i.e. about $5\times 10^9$ years) for higher values of $\varepsilon$ in the resonant regime. However, the results we obtained improve those achieved with the same techniques by other authors. Indeed, Niederman reached $\bar{t}\sim 4\times 10^9 $ years for $\varepsilon<10^{-13}$ in \cite{Niederman_1996}, whereas Castan obtained $\bar{t}\sim 1.3\times 10^{11} $ years for $\varepsilon<10^{-13}$ in \cite{Castan_2017}. In our case, since we made use of sharp methods based on vector field estimates, we were able to get good times of stability (i.e. greater or equal, say, than $1\times 10^{9}$ years) for values of $\varepsilon$ which are almost $100$ times greater than those in \cite{Niederman_1996} and in \cite{Castan_2017}, even though the theory is flawed, as we have just said, by the fast descrease of $m$ when $\varepsilon$ increases. This phenomenon, in turn, appears to be due to condition (\ref{hyp_it_1}) in lemma \ref{iterative},
$$
\frac{Tm\eo}{2}<1\ ,
$$
which insures that each iteration actually diminishes the magnitude of the non-resonant perturbation. By making use of the notations in paragraph \ref{analiticity}, one can equivalently rewrite it in the form
$$
\frac{Tm\varepsilon\nub{H_P}{4}}{2rs}<1\ .
$$
By looking at this expression, when considering increasing values for $\varepsilon$ one would be tempted to increase in turn $r$ or $s$ in order to compensate such growth and keep $m$ sufficiently high. Such strategy only works up to a certain point. Indeed, the constant $C_1(R)$ appearing in theorem \ref{stability} increases as $r^2$, but a huge value of $r$ amounts to enlarging the domain in which $\nub{H_P}{4}$ is estimated and, moreover, it entails a remarkable growth on the parameter $\delta$ associated with the remainder of order two for the unperturbed hamiltonian. In particular, the size of $\delta$ appears to be essential in this scheme, since it represents, roughly speaking, the {\itshape distance} to the resonance. Thus, increasing $r$ becomes helpless beyond a certain threshold. One may also be tempted to do the same thing with $s$ to keep the above inequality true. Unfortunately, this does not work at all since $s$ is the only analyticity width which is involved in the exponential stability (see expression (\ref{C1}) for $C_2$ in theorem \ref{stability} and take the definition of $\eo$ into account): even slight variations in its value lead to large deteriorations in the time of stability. Moreover, since the Fourier harmonics of $H_P$ diverge exponentially in the imaginary direction, a remarkable increase in $\nub{H_P}{4}$ is entailed when increasing $s$. A possible way to overcome such difficulty may be a sharper estimate on the size of the complex hamiltonian which does not make use of majorant series. More powerful techniques of perturbation theory may also be implemented, such as continuous averaging (see \cite{Treschev_Zubelevich_2010}).
\newline
When considering a non-zero radius $R$ of initial conditions in the action variables, we remark that, even in case a relatively large number of iterative steps $m$ is still available, results worsen if $R$ is too large and the system is thus too far from the resonant unperturbed torus. Such behaviour is due, once more, to the growth of the term $\delta$. In the sequel, we will see that this phenomenon arises dramatically when considering the same computations for the restricted, circular, planar problem.
\newline
Lastly, as we have already stressed, this study relies on rigorous estimates on the domain of analyticity for hamiltonian (\ref{ham_bello}) which are contained in \cite{Castan_2017}. In some sense, as we anticipated in paragraph \ref{motivation}, this opens an interesting discussion on the role of singularities in preventing Nekhoroshev stability. Actually, as the previous tables show, when considering increasing values for $\varepsilon$, one is also obliged to increase the size in the action variables of the domain of analyticity in order to get good times of stability. In our case, computations show that the magnitude of the complex hamiltonian grows significantly when considering a radius $r \sim 4\times 10^{-5}\times \max\{\Lambda_1^0,\Lambda_2^0\}$, so quite far from the region of the complex phase space that we are considering. Namely, the problem of having a low number $m$ of available perturbative steps for increasing values of $\varepsilon$ and the growth of $\delta$ appear well before singularities. However, as the same computations have shown, the latter may be an obstacle when dealing with non-sharp constants and when the initial estimates on functions and vector fields are rough. Indeed, in those cases one is obliged to choose smaller values for $\varepsilon$ and larger values for $r$ in order to get a good time of stability. In this light, singularities appear to be an essential difficulty when dealing with perturbation theory, at least when one considers the non-truncated model. It is interesting to notice that Tresch\"{e}v and Zubelevich pointed out the the importance of singularities in a different context when describing the continuous averaging method in \cite{Treschev_Zubelevich_2010}. 
\newline
In order to see what happened around different periodic tori, we also explored other resonances for the same masses, eccentricities and semi-major axis for the heavier planet: in all cases the first arising difficulty was the abrupt decrease in the optimal number of iterations $m$. Moreover, no significant improvement on the thresholds for $\varepsilon$ were reached. 
\newline
Finally, one should also remark that since $\beta\sim 1$ yields the best times of stability, the optimal choice for $u$ coincides in practice with the natural choice $u=\sqrt{rs}$. 
\subsection{The 3:1 resonance for the restricted problem}
As for the restricted case, we chose to study the $3:1$ resonance for a Sun-Jupiter-asteroid model (with smaller Jupiter's mass), as it corresponds to a region of phase space where the construction described in \cite{Celletti_Chierchia_2007} applies for suitable initial values of the eccentricity $e$. Indeed, for such model to hold, one needs the discarded harmonics to be smaller in value than those discarded in \cite{Celletti_Chierchia_2007}: this is precisely what we have checked preliminarily in our computations. Moreover, since in such case the perturbation is constructed by retaining only the most relevant harmonics from the complete perturbation, it is possible to compute a numerical averaging to higher orders in $\varepsilon$ in order to improve the thresholds for which theorem \ref{stability2} yields good times of stability. To achieve such goal, one can apply the near-to-identity transformations described in reference \cite{Celletti_Ferrara_1996}, where a different region in phase space for the same system is explored. Moreover, as we anticipated in paragraph \ref{motivation}, it is possible to have explicit expressions for the initial vector fields so that one can estimate their initial size without making use of the Cauchy inequalities. In particular, since we are working with analytic hamiltonians, the maximum modulus theorem (see \cite{Schiedemann_2005} for its statement and proof) insures that each function and each vector field component attains its maximum at the boundary of its domain. Therefore, our estimates were carried out by calculating the values of each function and each vector field component on a large number of randomly-chosen points belonging to the boundary of their domains and by taking their maximum. The chosen number of points was $10^6$ for each trial and multiple tests have been done to check that the estimates stayed stable for different random trials. Though not mathematically rigorous like those used in the planetary case, this method is an easy way to have a strong indication on initial estimates. If one wanted rigorous estimates (though the authors believe that they would not substantially differ from those obtained with the probabilistic draw described above) a possible solution avoiding Cauchy inequalities may involve the use of complex interval arithmetic (see e.g. \cite{Petkovic_1998}). Jupiter's eccentricity and semi-major axis are those calculated at J2000, we chose $e(0)\in[0,0.2]$ as the range of arbitrary initial values for the eccentricity of the massless body and we tried many different values for its semi-major axis in the neighborhood of the 3:1 resonance with Jupiter. As in the planetary case, the longest times of stability are obtained for an initial condition in the action $L$ corresponding exactly to the resonance. 
\newline
These, together with those obtained in \cite{Celletti_Ferrara_1996} in the non-resonant regime, are shown in the following table, where $N$ denotes the number of preliminary averagings to higher orders of the initial perturbation. We were only able to perform $N=1$ at most since more steps involved a huge increase in CPU time due to the randomly chosen boundary points involved in the initial estimates. However, even a single preliminary step gives a clear idea of how things work in the resonant regime we are considering. Indeed, the authors in \cite{Celletti_Ferrara_1996} deal with a high order completely non-resonant domain; nevertheless, we think it is interesting to compare the results obtained in the two cases, especially in terms of the thresholds on the perturbation, since a non-sharp version of Nekhoroshev theorem (originally stated in \cite{Poschel_1993}) was used in \cite{Celletti_Ferrara_1996}. 
\newline
\newline
\begin{center}
  \begin{tabular}{ | c | c | c | c | c | c | }
    \hline
    & {\footnotesize $N$} & {\footnotesize m} & {\footnotesize $\log(\varepsilon)$} & {\footnotesize $\bar{t}\ (y)$} & {\footnotesize $R_f(\bar{t})/\max\{\Lambda_1,\Lambda_2\}$} \\ \hline \hline
    {\footnotesize This work} & \footnotesize {$\ 0$} & {\footnotesize 8} & {\footnotesize $-8.75$} & {\footnotesize $2.13\times 10^{11}$} & {\footnotesize $2.87\times 10^{-5}$} \\ \hline
    {\footnotesize This work} & \footnotesize {$\ 1$} & {\footnotesize 8} & {\footnotesize $-7.00$} & {\footnotesize $1.20\times 10^{9}\ $} & {\footnotesize $3.25\times 10^{-6}$} \\ \hline
    {\footnotesize Celletti \& Ferrara (1996)} & {\footnotesize $\ 0$} & {\footnotesize -} & {\footnotesize $-13.00$} & {\footnotesize $1.13\times 10^{10}$} & {\footnotesize $4.47\times 10^{-6}$} \\ \hline
    {\footnotesize Celletti \& Ferrara (1996)} & {\footnotesize $\ 1$} & {\footnotesize -} & {\footnotesize $-8.00$} &  {\footnotesize $\ 1.13\times 10^{10}\ $} & {\footnotesize $2.00\times 10^{-7}$}  \\ \hline
  \end{tabular}
  \captionof{table}{{\footnotesize From left to right: number of preliminary averaging steps, number of iterative steps, size of the perturbation, time of stability, variation of the action variables.}}
\end{center}
\ 
\newline
As one can easily see, sharp estimates seem to play a role since the thresholds on the value of $\varepsilon$ yielding good times of stability are largely improved. It would be interesting to develop more powerful numerical tools in order to compare our sharp results with those in \cite{Celletti_Ferrara_1996} for higher values of $N$ ($N\leq 4$ in \cite{Celletti_Ferrara_1996}). However, we expect the confinement in the action variables to be less strong in our case, since we are in a low order resonant region. We also expect that a higher order of preliminary averaging $N$ would allow one to reach good thresholds on the allowed size of $\varepsilon$ and good times of stability.  
\newline
By any means, as far as we focus on the limits of the theory we deal with, our computations for the restricted problem show that the main issue is the growth with $\varepsilon$ of the bound $\delta$ on the remainder of order two in the developement of the unperturbed hamiltonian. As we showed when considering computations for the planetary case and as explicit estimates in theorem \ref{stability2} show, one is obliged to choose larger domains in the action variables when increasing the value of $\varepsilon$ in order to get a good time of stability. Therefore $\delta$ may become large, since averaging theory leaves the unperturbed hamiltonian untouched. This, in turn, prevents iterative lemma \ref{it_lem_2} from working properly (it may not dimish the size of the perturbation enough when $\delta$ is too big). One could attempt to hinder such growth by diminishing the analyticity width in the action variables, but this would only result in diminishing the time of stability since the costant $C_3$ in (\ref{C3}) increases as $r_L^2$. Our computations show that, for $N=0$, the growth of $\delta$ becomes preponderant when considering magnitudes for the perturbation such that $\varepsilon<10^{-10}$. Increasing the number $N$ of preliminary averaging steps seems thus the only possible way in order to get more realistic values for $\varepsilon$.
\appendix
\section{Proof of the estimates in lemma \ref{iterative}}\label{proof_est}
In this appendix, we give the proof of estimates (\ref{bound_chd_f}), (\ref{bound_cht_f}) and (\ref{bound_it_function}) in the statement of lemma \ref{iterative}.
\newline
We start by remarking that the hamiltonian vector field $\ch{r_1}$ of the remainder (\ref{reste}) in lemma (\ref{iterative}) is bounded by
\begin{align}\label{Mw}
\begin{split}
\nub{\ch{r_1}^j}{1-2\auno}\leq &
\integ{0}{1}{\sum_{k=1}^8\nub{\mathcal{M}^{jk}\left(\left[\ch{\phi_1},\ch{g_0+tf_0}\right]^k\circ\flphit\right)}{1-2\auno}}{t}\\
\leq &
\sum_{k=1}^8\nub{\mathcal{M}^{jk}}{1-2\auno}\nub{\left[\ch{\phi_1},\ch{g_0+tf_0}\right]^k}{1-\auno}\ .\\ 
\end{split}
\end{align}
Then we state the following 
\newtheorem{definition}[]{Definition}
\begin{definition}\label{def_bounded}
Let $\mathcal{A}$ be a $m\times n$ matrix whose entries $a_{jk}$, with $j\in\{1,...,m\}$ and $k\in\{1,...,n\}$, are complex-valued functions defined in a complex domain $\mathcal{E}$, i.e.
$$
a_{jk}:\mathbb{C}^l\supset\mathcal{E}\longrightarrow \mathbb{C}\ ,
$$
with $l$ a positive integer.
\newline
Let $\mathcal{B}$ be a $m\times n$ matrix with constant real entries.\newline
We say that $\mathcal{A}$ is bounded by $\mathcal{B}$ on $\mathcal{E}$, and we simply write $\mathcal{A}\leq\mathcal{B}$, iff
$$
|a_{jk}|_{\mathcal{E}}\leq b_{jk}\ \ \forall\ (j,k)\ \in\ \{1,...,m\}\times\{1,...,n\}\ .
$$
\end{definition}
\ 
\newline
With this definition, we can state that
\newtheorem{M_est}[]{Lemma} 
\begin{M_est}\label{M_est}
$\mathcal{M}$ is bounded on $\dom{1-2\auno}$ by a matrix 
$$\bar{\mathcal{M}}
\:=
\left(
\begin{matrix}
\bar{\mathcal{M}}_A & \bar{\mathcal{M}}_B\\
\bar{\mathcal{M}}_C & \bar{\mathcal{M}}_D\\
\end{matrix}
\right)
$$
whose blocks read
$$
\bar{\mathcal{M}}_A:= \left(
\begin{matrix}
\displaystyle\frac{T\eo}{2\auno}+1 & \displaystyle\frac{T\eo}{2\auno} & \displaystyle\sqrt{\frac{r}{s}}\frac{T\Eo}{2\auno\beta} & \displaystyle\sqrt{\frac{r}{s}}\frac{T\Eo}{2\auno\beta} \\
\ & \ & \ & \ \\
\displaystyle\frac{T\eo}{2\auno} & \displaystyle\frac{T\eo}{2\auno}+1 & \displaystyle\sqrt{\frac{r}{s}}\frac{T\Eo}{2\auno\beta} & \displaystyle\sqrt{\frac{r}{s}}\frac{T\Eo}{2\auno\beta} \\
\ & \ & \ & \ \\
\displaystyle\beta\sqrt{\frac{s}{r}}\frac{T\eo}{2\auno} & \displaystyle\beta\sqrt{\frac{s}{r}}\frac{T\eo}{2\auno} & \displaystyle\frac{T\Eo}{2\auno}+1 & \displaystyle\frac{T\Eo}{2\auno}\\
\ & \ & \ & \ \\
\displaystyle\beta\sqrt{\frac{s}{r}}\frac{T\eo}{2\auno} & \displaystyle\beta\sqrt{\frac{s}{r}}\frac{T\eo}{2\auno} & \displaystyle\frac{T\Eo}{2\auno} & \displaystyle\frac{T\Eo}{2\auno}+1
\end{matrix}
\right)\ ,
$$
$$
\bar{\mathcal{M}}_B:= \left(
\begin{matrix}
\displaystyle\frac{r}{s}\frac{T\eo}{2\auno} & \displaystyle\frac{r}{s}\frac{T\eo}{2\auno} & \displaystyle\sqrt{\frac{r}{s}}\frac{T\Eo}{2\auno\beta} & \displaystyle\sqrt{\frac{r}{s}}\frac{T\Eo}{2\auno\beta} \\
\ & \ & \ & \ \\
\displaystyle\frac{r}{s}\frac{T\eo}{2\auno} & \displaystyle\frac{r}{s}\frac{T\eo}{2\auno} & \displaystyle\sqrt{\frac{r}{s}}\frac{T\Eo}{2\auno\beta} & \displaystyle\sqrt{\frac{r}{s}}\frac{T\Eo}{2\auno\beta}  \\
\ & \ & \ & \ \\
\displaystyle\beta\sqrt{\frac{r}{s}} \frac{T\eo}{2\auno} & \displaystyle\beta\sqrt{\frac{r}{s}} \frac{T\eo}{2\auno} & \displaystyle\frac{T\Eo}{2\auno} & \displaystyle\frac{T\Eo}{2\auno} \\
\ & \ & \ & \ \\
\displaystyle\beta\sqrt{\frac{r}{s}} \frac{T\eo}{2\auno} & \displaystyle\beta\sqrt{\frac{r}{s}} \frac{T\eo}{2\auno} & \displaystyle\frac{T\Eo}{2\auno} & \displaystyle\frac{T\Eo}{2\auno} \\
\end{matrix}
\right)\ ,
$$
$$
\bar{\mathcal{M}}_C:= \left(
\begin{matrix}
\displaystyle\frac{s}{r}\frac{T\eo}{2\auno} & \displaystyle\frac{s}{r}\frac{T\eo}{2\auno} & \displaystyle\sqrt{\frac{s}{r}}\frac{T\Eo}{2\auno\beta} & \displaystyle\sqrt{\frac{s}{r}}\frac{T\Eo}{2\auno\beta} \\
\ & \ & \ & \ \\
\displaystyle\frac{s}{r}\frac{T\eo}{2\auno} & \displaystyle\frac{s}{r}\frac{T\eo}{2\auno} & \displaystyle\sqrt{\frac{s}{r}}\frac{T\Eo}{2\auno\beta} & \displaystyle\sqrt{\frac{s}{r}}\frac{T\Eo}{2\auno\beta}  \\
\ & \ & \ & \ \\
\displaystyle\beta \sqrt{\frac{s}{r}}\frac{T\eo}{2\auno} & \displaystyle\beta \sqrt{\frac{s}{r}}\frac{T\eo}{2\auno} & \displaystyle\frac{T\Eo}{2\auno} & \displaystyle\frac{T\Eo}{2\auno}\\
\ & \ & \ & \ \\
\displaystyle\beta \sqrt{\frac{s}{r}}\frac{T\eo}{2\auno} & \displaystyle\beta \sqrt{\frac{s}{r}}\frac{T\eo}{2\auno} & \displaystyle\frac{T\Eo}{2\auno} & \displaystyle\frac{T\Eo}{2\auno} \\
\end{matrix}
\right)\ ,
$$
$$
\bar{\mathcal{M}}_D:= \left(
\begin{matrix}
\displaystyle\frac{T\eo}{2\auno}+1 & \displaystyle\frac{T\eo}{2\auno} &  \displaystyle\sqrt{\frac{s}{r}}\frac{T\Eo}{2\auno\beta} & \displaystyle\sqrt{\frac{s}{r}}\frac{T\Eo}{2\auno\beta} \\
\ & \ & \ & \ \\
\displaystyle\frac{T\eo}{2\auno} & \displaystyle\frac{T\eo}{2\auno}+1 &  \displaystyle\sqrt{\frac{s}{r}}\frac{T\Eo}{2\auno\beta} & \displaystyle\sqrt{\frac{s}{r}}\frac{T\Eo}{2\auno\beta} \\
\ & \ & \ & \ \\
\displaystyle\beta \sqrt{\frac{r}{s}}\frac{T\eo}{2\auno} & \displaystyle\beta \sqrt{\frac{r}{s}}\frac{T\eo}{2\auno} & \displaystyle\frac{T\Eo}{2\auno}+1 & \displaystyle\frac{T\Eo}{2\auno}\\
\ & \ & \ & \ \\
\displaystyle\beta \sqrt{\frac{r}{s}}\frac{T\eo}{2\auno} & \displaystyle\beta \sqrt{\frac{r}{s}}\frac{T\eo}{2\auno} & \displaystyle\frac{T\Eo}{2\auno} & \displaystyle\frac{T\Eo}{2\auno}+1
\end{matrix}
\right)\ .
$$
\end{M_est}
\begin{proof}
We consider the Jacobian $D\flphit$ and we decompose it into $4\times 4$ matrix blocks
$$
D\flphit :=
\left(
\begin{matrix}
A & B\\
C & D \\
\end{matrix}
\right)\ ,
$$ 
\ 
\newline
Since the matrices $\mathcal{J}$ and $\mathcal{J}^{-1}=-\mathcal{J}$ act on the blocks of $D\flphit$ by mixing and transposing them, then $\mathcal{M}$ reads
$$
\mathcal{M}= \mathcal{J}(D\flphit)^\dag\mathcal{J}^{-1}=
\left(
\begin{matrix}
-D^\dag & \ \ B^\dag \\
\ \ C^\dag & -A^\dag \\
\end{matrix}
\right)\ .
$$
The proof of the statement follows by making use of the Cauchy inequalities for each entry of $\mathcal{M}$.
\end{proof}
\ 
\newline
Once $\mathcal{M}$ has been bounded, we must give an estimate to the Lie brackets appearing in expression (\ref{reste}). To do so, we use a result which is proven in \cite{Fasso_1990} and which we briefly recall in the sequel. 
\newline
\newline
Consider $\mathcal{E}$, an open and bounded domain of $\mathbb{R}^n$, and two vectors $\varsigma,\sigma\in \mathbb{R}^n$ with positive entries and such that for each component $\sigma_j<\varsigma_j,\ j\in\{1,...,n\}$. We define the complex polydisk $\mathcal{E}_{\varsigma}$ as
$$
\mathcal{E}_{\varsigma}:=
\{
z\in\mathbb{C}^n \text{ s.t. } |z_j-z_j^*|<\varsigma_j,\ z_j^*\in\mathcal{E}
\}
$$
and we have the following estimates on Lie and Poisson brackets:
\newtheorem{fasso}[]{Lemma}\label{Fasso1990}
\begin{fasso}
Let $X$ be a hamiltonian vector field analytic in $\mathcal{E}_{\varsigma}$, with an associated hamiltonian function $\mathcal{H}$.
\newline
Then:
\begin{enumerate}
\item For any function $f$ analytic in $\mathcal{B}_\eta$ one has 
\begin{equation}\label{fasso_f}
\nub{\{\mathcal{H},f\}}{\varsigma-\sigma}=\nub{L_X(f)}{\varsigma-\sigma}\leq \max_{j\in\{1,...,n\}}\left(\frac{\nub{X^j}{\varsigma-\sigma}}{\sigma_j}\right)\nub{f}{\varsigma}\ .
\end{equation}
\item For any vector field $Y$, analytic in $\mathcal{E}_{\varsigma}$, one has
\begin{equation}\label{fasso_ch}
\nub{\lie{X}{Y}^k}{\varsigma-\sigma}\leq \nub{X^k}{\varsigma}\max_{j\in\{1,...,n\}}\left(\frac{\nub{Y^j}{\varsigma}}{\sigma_j}\right)
+\nub{Y^k}{\varsigma}\max_{j\in\{1,...,n\}}\left(\frac{\nub{X^j}{\varsigma}}{\sigma_j}\right)\ .
\end{equation}
\end{enumerate}
\end{fasso}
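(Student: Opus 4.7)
The plan is to reduce each statement to a one-variable Cauchy estimate by slicing the polydisk along the distinguished direction prescribed by the vector field itself; part (1) is the core computation and part (2) will follow from two applications of part (1) to the components $Y^k$ and $X^k$, using the identity $\lie{X}{Y}^k = L_X(Y^k) - L_Y(X^k)$.

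First I would prove (1). Fix an arbitrary point $z_0 \in \mathcal{E}_{\varsigma-\sigma}$ and introduce the auxiliary function of a single complex variable
\begin{equation*}
g(\lambda) := f\bigl(z_0 + \lambda\, X(z_0)\bigr),
\end{equation*}
where the translation vector $X(z_0)$ is held \emph{frozen} at $z_0$. Writing $z_0 = z_0^* + w$ with $z_0^* \in \mathcal{E}$ and $|w_j| < \varsigma_j - \sigma_j$, the point $z_0 + \lambda X(z_0)$ lies in $\mathcal{E}_\varsigma$ as soon as $|w_j + \lambda X^j(z_0)| < \varsigma_j$ for every $j$, which is guaranteed whenever
\begin{equation*}
|\lambda| < R(z_0) := \min_{j} \frac{\sigma_j}{|X^j(z_0)|} = \left[\max_{j} \frac{|X^j(z_0)|}{\sigma_j}\right]^{-1}.
\end{equation*}
Hence $g$ is holomorphic and bounded by $\nub{f}{\varsigma}$ on the disk $\{|\lambda| < R(z_0)\}$. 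A direct computation gives $g'(0) = \sum_j X^j(z_0)\, \partial_j f(z_0) = L_X(f)(z_0)$, and the one-variable Cauchy inequality yields
\begin{equation*}
|L_X(f)(z_0)| \leq \frac{\nub{f}{\varsigma}}{R(z_0)} = \nub{f}{\varsigma} \max_{j} \frac{|X^j(z_0)|}{\sigma_j} \leq \nub{f}{\varsigma} \max_{j} \frac{\nub{X^j}{\varsigma-\sigma}}{\sigma_j}.
\end{equation*}
Taking the supremum over $z_0 \in \mathcal{E}_{\varsigma-\sigma}$ gives exactly the desired estimate; the identification $\{\mathcal{H},f\} = L_X(f)$ is the standard relation between Poisson bracket and Hamiltonian vector field.

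For part (2), writing $\lie{X}{Y}^k = \sum_j \bigl(X^j \partial_j Y^k - Y^j \partial_j X^k\bigr) = L_X(Y^k) - L_Y(X^k)$ and applying (1) componentwise gives
\begin{equation*}
\nub{L_X(Y^k)}{\varsigma-\sigma} \leq \nub{Y^k}{\varsigma} \max_{j} \frac{\nub{X^j}{\varsigma-\sigma}}{\sigma_j} \leq \nub{Y^k}{\varsigma}\max_{j}\frac{\nub{X^j}{\varsigma}}{\sigma_j},
\end{equation*}
and symmetrically for $L_Y(X^k)$; summing the two contributions via the triangle inequality produces the claimed bound. The monotonicity $\nub{\cdot}{\varsigma-\sigma} \leq \nub{\cdot}{\varsigma}$ takes care of matching the norms to the statement.

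The conceptual point that makes the proof work is the choice of the complex line $\lambda \mapsto z_0 + \lambda X(z_0)$ aligned with $X$ itself: a naive coordinate-by-coordinate Cauchy estimate $|\partial_j f| \leq \nub{f}{\varsigma}/\sigma_j$ combined with the identity $L_X f = \sum_j X^j \partial_j f$ would only give a bound involving $\sum_j \nub{X^j}{\varsigma-\sigma}/\sigma_j$, which is weaker by a potential factor of $n$ and destroys the "max" form that is essential for the iterative scheme of Lemma \ref{iterative}. Apart from this observation, both estimates are routine applications of the maximum modulus principle.
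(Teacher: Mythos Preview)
Your proof is correct. The paper does not actually prove this lemma: it merely states the result and attributes the proof to \cite{Fasso_1990}, so there is no in-paper argument to compare against. The directional-slice argument you give---freezing $X(z_0)$ and applying the one-variable Cauchy estimate along the line $\lambda \mapsto z_0 + \lambda X(z_0)$---is precisely the standard proof from Fass\`o's paper, including the observation that this yields the $\max_j$ rather than the $\sum_j$ bound. Your reduction of part (2) to part (1) via $[X,Y]^k = L_X(Y^k) - L_Y(X^k)$ is likewise the expected route.
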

\
\newline
As a straightforward consequence of this lemma we have the following
\newtheorem{lie_brackets}[]{Corollary}
\begin{lie_brackets}
The expression $\nub{\left[\ch{\phi_1},\ch{g_0+tf_0}\right]}{1-\auno}$ appearing in formula $(\ref{Mw})$ can be bounded by the quantity
\begin{equation}
\bar{w}:=
\frac{1}{\auno}
\left(
\begin{matrix}
\nub{\displaystyle\ch{\phi_1}^{I_j}}{1}\chi_0
+
\Theta_0\left(\nub{\ch{f_0}^{I_j}}{1}+\nub{\ch{g_0-\mathcal{G}}^{I_j}}{1}\right)
 \\
 \ \\
\nub{\displaystyle\ch{\phi_1}^{x_j}}{1}\chi_0
+
\Theta_0\left(\nub{\ch{f_0}^{x_j}}{1}+\nub{\ch{g_0-\mathcal{G}}^{x_j}}{1}\right)
 \\
 \ \\
  \nub{\displaystyle\ch{\phi_1}^{\vartheta_j}}{1}\chi_0
+
\Theta_0\left(\nub{\ch{f_0}^{\vartheta_j}}{1}+\nub{\ch{g_0-\mathcal{G}}^{\vartheta_j}}{1}+
\nub{\ch{\mathcal{G}}^{\vartheta_j}}{1}\right)
 \\
 \ \\
\nub{\displaystyle\ch{\phi_1}^{y_j}}{1}\chi_0
+
\Theta_0\left(\nub{\ch{f_0}^{y_j}}{1}+\nub{\ch{g_0-\mathcal{G}}^{y_j}}{1}\right)
 \\
\end{matrix}
\right)\ ,
\end{equation}
where $\Theta_0,\chi_0$ are defined in (\ref{chi0}).

\end{lie_brackets}
\
\newline
By plugging these estimates into expression (\ref{Mw}), one can find a bound on the hamiltonian vector field of the remainder which reads
\begin{align}
\begin{split}
\nub{\ch{r_1}^j}{1-2\auno}\leq
\sum_{k=1}^8\nub{\mathcal{M}^{jk}}{1-2\auno}\nub{\left[\ch{\phi_1},\ch{g_0+tf_0}\right]^k}{1-\auno}\leq \sum_{k=1}^8\bar{\mathcal{M}}^{jk}\bar{w}^k .\\ 
\end{split}
\end{align}
Estimates (\ref{bound_chd_f}) and (\ref{bound_cht_f}) follow immediately from the expression above when one takes into account expressions (\ref{def_g1}) and (\ref{def_f1}) as well as the definitions of the anisotropic norms.
\
\newline
In order to get an estimate on the remainder, on the other hand, we immediately remark that the latter can be bounded by expression
\begin{equation}\label{remainderexp}
\nub{r_1}{1-2\auno} = \nub{\integ{0}{1}{\{\phi_1,g_0+t f_0\}\circ\Lambda_{\phi_1}^t}{t}}{1-2\auno}\leq \nub{\{\phi_1,g_0\}}{1-\auno}+\nub{\{\phi_1,f_0\}}{1-\auno}\ .
\end{equation}
By applying formula (\ref{fasso_f}) to the two terms on the right side of this inequality and by taking the following estimate 
\begin{align*}
\nub{\phi_1}{1} := \nub{\frac{1}{T}\integ{0}{T}{t f_0\circ\Lambda_h^t}{t}}{1}
\leq \frac{T}{2}\nub{f_0}{1}
\end{align*}
into account one gets estimate (\ref{bound_it_function}).
\section{Proof of corollaries \ref{corol_it} and \ref{Corollary_nf}}\label{Prova_corollari}
\subsection{Proof of corollary \ref{corol_it}}
\begin{proof}
From the one-parameter group properties of the hamiltonian flow $\flphit$, one has
$$
\Lambda^{-t}_{\phi_1}\circ \flphit = id\ .
$$
Thanks to the linearity of the operator $L_{\phi_1}$ one can also write
$$
\Lambda_{\phi_1}^{-1}:= \exp{-L_{\phi_1}} = \exp{L_{-\phi_1}}:= \Lambda_{-\phi_1}^{1}\ ;
$$
consequently, the same estimates hold for $\flphit$ and $\Lambda^{-t}_{\phi_1}$.
\newline
Moreover, the following inclusion holds
\begin{align}
\begin{split}
\Lambda^{-1}_{\phi_1}\circ\Lambda^{1}_{\phi_1}(\dom{1-2\auno})=\dom{1-2\auno}\subset \Lambda^{-1}_{\phi_1}\left(\dom{1-\auno}\right)\ ,
\end{split}
\end{align}
so that finally we can define
\begin{align}
\begin{split}
&\Phi_1^{-1}:\dom{1-\auno}\longrightarrow\dom{1}\\ 
\ \\
&(I,\vartheta,x,y)\longmapsto \Lambda_{\phi_1}^{-1}(I,\vartheta,x,y)\ ,
\end{split}
\end{align}
and we can insure that estimates (\ref{size_inv}) hold.
\end{proof}
\subsection{Proof of corollary \ref{Corollary_nf}}
\begin{proof}
Consider $\Psi_m^{-1}:= \Phi_m^{-1}\circ ... \circ \Phi_1^{-1}$; one has
\begin{align*}
& \ndb{\Psi_m^{-1}-id}{1-\frac{T\eo}{2}\frac{1-q_1^m}{1-q_1},1-\frac{T\Eo}{2}\frac{1-q_2^m}{1-q_2}}
\leq  \\
& \ndb{\Phi_m^{-1}\circ ... \circ \Phi_1^{-1}-\Phi_{m-1}^{-1}\circ ...\circ \Phi_1^{-1}}{1-\frac{T\eo}{2}\frac{1-q_1^m}{1-q_1},1-\frac{T\Eo}{2}\frac{1-q_2^m}{1-q_2}}\\
\ \\
+ & \ndb{\Phi_{m-1}^{-1}\circ ... \circ \Phi_1^{-1}-\Phi_{m-2}^{-1}\circ ...\circ \Phi_1^{-1}}{1-\frac{T\eo}{2}\frac{1-q_1^m}{1-q_1},1-\frac{T\Eo}{2}\frac{1-q_2^m}{1-q_2}}\\
\ \\
+ & ...
+ \ndb{\Phi_1^{-1}-id}{1-\frac{T\eo}{2}\frac{1-q_1^m}{1-q_1},1-\frac{T\Eo}{2}\frac{1-q_2^m}{1-q_2}}\\
\ \\
\leq & \sum_{j=0}^{m-1}\frac{T\eta_j}{2} = \frac{T\eta_0}{2}\sum_{j=0}^{m-1} q_1^j
\leq \frac{T\eo}{2}\frac{1-q_1^m}{1-q_1}
\end{align*}
where we have made use of corollary (\ref{corol_it}) for each function $\Phi_j^{-1}$.
Similar reasonings yields
$$
\ntb{\Psi_m^{-1}-id}{1-\frac{T\eo}{2}\frac{1-q_1^m}{1-q_1},1-\frac{T\Eo}{2}\frac{1-q_2^m}{1-q_2}}\leq \frac{T\Eo}{2}\frac{1-q_2^m}{1-q_2}\ .
$$
On the other hand, it is straightforward to see that
$$
\dom{1-\frac{T\eo}{2}\frac{1-q_1^m}{1-q_1},1-\frac{T\Eo}{2}\frac{1-q_2^m}{1-q_2}} \subset \Psi_m\left(\dom{1}\right)\ ,
$$
so that $\Psi_m^{-1}$ is well defined.
\end{proof}
\section{Normal form for the restricted problem}\label{lem_res}
\newtheorem{res_nf_2}[ac_stability]{Lemma (normal form lemma)}\label{nf_lemma_res}
\begin{res_nf_2}
With the definitions above, suppose that there exist a positive integer $m$ and five real numbers $p,q_j,\text{with }j\in\{L,G,l,g\}, $ such that 
\begin{align}
2\upsilon_0^j(m)<q_j\ ,\ \ \ 
2\zeta_0(m)< p\ .
\end{align}
Then there exist a symplectic transformation $\Psi_m$, analytic and real-valued for any real argument
$$
\Psi_m: \dom{1}\longrightarrow\dom{1+\frac{T\eo^j}{2}\frac{1-q_j^m}{1-q_j}}\ ,
$$
whose size is 
\begin{equation}\label{size_nf2}
\ndb{\left(\Psi_m-id\right)^j}{1}\leq \frac{T\eo^j}{2}\frac{1-q_j^m}{1-q_j}\ ,\ \ \ \ j\in\{L,G,l,g\}
\end{equation}
such that
$$
H_m:= H_0\circ\Psi_m=h+g_m+f_m\ ,
$$
where $\{h,g_m\}=0$ 
and $ \langle f_m\rangle_h=0$.
\newline
Furthermore, one has the following estimates $(j\in\{L,G,l,g\})$:
\begin{align}\label{ch_nf2}
\begin{split}
\ndb{\ch{f_m}^j}{1}\leq &\ q_j^m\eo^j\\
\ndb{\ch{g_m}^j-\ch{\mathcal{G}}^j}{1}\leq &\ \go^j+\frac{q_j}{2}\frac{1-q_j^m}{1-q_j}\eo^j\\
\nub{f_m}{1}\leq &\ p^m\nub{f_0}{3}\\
\nub{g_m-\mathcal{G}}{1}\leq &\ \nub{g_0-\mathcal{G}}{3}+\frac{p}{2}\frac{1-p^{m}}{1-p}\nub{f_0}{3}\ .\\ 
\end{split}
\end{align}
\end{res_nf_2}
\ 
\newline
As in section \ref{stab}, such lemma can be demonstrated by iterating $m$ times the following
\newtheorem{it_lem_2}[ac_stability]{Lemma (iterative lemma)}
\begin{it_lem_2}\label{it_lem_2}
Assume the construction of section \ref{ham_fr_r} and suppose that for a real number $\auno \in \left(0,1\right)$ one has 
\begin{equation}\label{hyp_it_2}
\frac{T}{2\auno}\max\{\eo^L,\eo^G,\eo^l,\eo^g\}< 1.\ 
\end{equation}
Then there exist a symplectic analytical transformation $\Phi_1$ of generating function $\phi_1$
$$
\Phi_1: \dom{3-2\auno}\longrightarrow \dom{3-\auno}\ ,
$$
which is real valued for any real argument, and whose size is 
\begin{equation}\label{size_it2}
||(\Phi_1-id)^j||_{3-2\auno}\leq \frac{T\eo^j}{2}\ \ \ \ j\in\{L,G,l,g\},
\end{equation}
which takes the hamiltonian into the following form:
$$
H_1:= H_0\circ\Phi_1= h+g_1+f_1\ ,
$$
where $\{h,g_1\}=0$ and $\langle f_1\rangle_h =0$. 
\newline
\newline
Furthermore, one has the following estimates on functions and vector fields
\begin{align}\label{est2}
\begin{split}
\nubfunos \leq &\ 2 \zeta_0\left(\frac{1}{\auno}\right) |f_0|_3,\ \ \ \nubgunos \leq \zeta_0\left(\frac{1}{\auno}\right)|f_0|_3+\nub{g_0-\mathcal{G}}{3}\ ,\\
\ndbfunos \leq &\ 2\upsilon_0^j\left(\frac{1}{\auno}\right)\eo^j,\ \ \ \ndbgunos \leq \upsilon_0^j\left(\frac{1}{\auno}\right)\eo^j + \go^j\ ,
\end{split}
\end{align}
where $j\in\{L,G,l,g\}$.
\end{it_lem_2}

The normal form lemma and the iterative are proven exactly as lemmas \ref{nf_lemma} and (\ref{iterative}), so we omit their demonstrations.
Moreover, a corollary on the existence of the inverse transformation for the normal form holds also in this case. Its statement and proof are exactly the same of corollary \ref{Corollary_nf}, so we omit them as well.

\section*{Acknowledgements}
We are indebted to Dr. Thibaut Castan for providing the codes to estimate the size of the three-body planetary problem perturbation on complex domains and for many interesting discussions on the subject. We would also like to thank Prof. Philippe Robutel for his useful remarks and Prof. Francesco Fass\`{o} for his advices when dealing with Mathematica.

\bibliographystyle{plainnat}
\bibliography{Article}

\end{document}